\documentclass[11pt]{article}

\usepackage{authblk}
\usepackage{orcidlink}

\usepackage{amsfonts}
\usepackage{hyperref}

\usepackage[a4paper,top=1in,bottom=1in,left=1in,right=1in]{geometry}
\usepackage{comment}
\usepackage{epstopdf}
\usepackage{subfigure}
\usepackage[authoryear]{natbib}

\usepackage{amsmath}
\usepackage{amsthm}
\theoremstyle{plain}
\newtheorem{theorem}{Theorem}[section]
\newtheorem{corollary}[theorem]{Corollary}

\newtheorem{proposition}[theorem]{Proposition}

\newtheorem{assumption}{Assumption}
\newtheorem{remark}{Remark}

\usepackage{tabularx}
\usepackage{graphicx}
\usepackage{bm}
\usepackage{xcolor} 

\usepackage{titlesec}

\title{If Not Now, Then When? Model Risk in the Optimal Exercise of American Options}

\author{Luna Rigby$^{\ast}$$^{\dag}$%
\thanks{Corresponding author. Email: luna.rigby@wu.ac.at}%
\orcidlink{0009-0008-4756-2891}}
\author{Rüdiger Frey$^{\dag}$\orcidlink{0000-0002-8402-4653}}
\author{Erik Schlögl$^{\ddag \diamondsuit \clubsuit}$\orcidlink{0000-0001-8898-6879}}
\affil{\footnotesize $\dag$ WU Vienna University of Economics and Business, Austria\\
$\ddag$ School of Mathematical and Physical Sciences, University of Technology Sydney, Australia\\
$\diamondsuit$ The African Institute of Financial Markets and Risk Management (AIFMRM), University of Cape Town, South Africa\\
$\clubsuit$ Faculty of Science, Department of Statistics, University of Johannesburg, South Africa}

\begin{document}

\date{\textit{Preprint (\today)}}

\vspace{-2cm}
\maketitle
\vspace{-3.5em}

\renewcommand{\abstractname}{}
\begin{abstract}
\noindent \textbf{Abstract.} Model risk arises from the misspecification of probabilistic models used for pricing and hedging derivatives. While model risk for European-style claims has been widely studied, much less attention has been given to American-style derivatives and the associated optimal stopping problems. This paper analyzes model risk in the optimal exercise of an American put option using the benchmark methodology of \cite{Hull2002}. The true data-generating process is assumed to follow a Heston stochastic volatility model. We compare the optimal exercise strategy of an investor who correctly uses the Heston model with those of investors who instead use misspecified Black--Scholes or Dupire local volatility models. Optimal exercise boundaries are computed numerically via finite difference methods.
Stochastic volatility dynamics and return--volatility correlation are found to have a substantial impact on optimal exercise behavior across models, creating a source of model risk. As this behavior is not transmitted to exercise strategies determined by misspecified models, even if such models are fully calibrated to European option prices, calibration fails to mitigate model risk in this context. This issue persists under frequent recalibration of a misspecified model.
\end{abstract}

\noindent\textbf{Keywords:} Model Risk, American Options, Optimal Exercise Boundary, Heston Model

\section{Introduction}
\label{sec:Introduction}

The relevance of model risk in quantitative finance is well captured by George Box’s famous observation that `all models are wrong, but some are useful': financial asset price dynamics are not governed by structural laws but emerge from the interaction of market participants, institutional constraints, and information flows. As a result, probabilistic models used for pricing and risk management are inherently misspecified. The risk of losses arising from such misspecification is commonly referred to as model risk, and its importance has been widely documented in the literature on derivative pricing and hedging.

Model risk has been studied extensively in two core applications of quantitative finance: derivative valuation and hedging. In valuation, mispricing may arise either from inadequate calibration or from structural deficiencies of the pricing model, such as an insufficient representation of tail risk or an incorrect specification of the dependence between asset returns and volatility, which may lead to an underestimation of large price moves. These issues are often addressed by a worst-case approach: to quantify model risk one maximizes over a set of pricing measures representing adverse scenarios, see \cite{Cont2006}. This framework is closely related to optimal transport and to the theory of coherent and convex risk measures \citep{Artzner1999, bib:foellmer-schied-02b}.

In hedging, model risk arises through the sensitivity of hedging strategies to assumptions on the underlying dynamics, particularly in the presence of stochastic volatility or price jumps. Early contributions include the robustness analysis of hedging strategies by \cite{KarouiJeanblancShreve1998} and the uncertain volatility framework of \cite{Avellaneda1995}. Beyond model misspecification, market frictions such as limited liquidity, transaction costs, and trading constraints further affect hedging performance, see for instance \cite{bib:frey-00c} or \cite{bib:frey-polte-09}.

The above contributions focus primarily on European-style claims. In contrast, comparatively little attention has been paid to model risk for American-style derivatives and the associated optimal stopping problems.  For American options, the pricing model determines not only continuation values but also the optimal exercise strategy. Model misspecification therefore affects both the valuation of future payoffs and the timing of exercise. Even if a model is calibrated to reproduce European option prices exactly, errors in the specification of the underlying dynamics may induce suboptimal exercise policies and lead to economically significant losses. This would run contrary to the expectation that sufficient calibration to liquid option prices can ``lock down'' a misspecified model enough to eliminate most model risk. 

These issues are investigated in the present paper. We adopt the benchmark methodology for measuring model risk proposed in \cite{Hull2002}, which reflects common practices of market participants and captures key features of decision-making by option traders. We consider a continuous-time market in which the asset price follows a Heston stochastic volatility model, taken to represent the true data-generating process. The market provides quotes for European vanilla options that are consistent with this true model. Within this framework, we compare the exercise strategies of two investors holding an American put option. A sophisticated investor employs the correctly specified Heston model, calibrated to the observed option prices, to determine the optimal exercise policy for the put. An unsophisticated investor, by contrast, calibrates either a Black--Scholes model or a local volatility model in the sense of \cite{Dupire1994} to the same set of vanilla option prices and subsequently determines the exercise policy for the put within the simpler modeling framework. In this setting, we distinguish between two scenarios. First, we consider the case in which the unsophisticated investor calibrates the model only once at the initial time point. Second, we perform recalibration on a recurring basis, consistent with common market practice. For computational tractability, we restrict ourselves to the Black--Scholes model in the latter analysis.

We generate a large number of asset price trajectories under the true Heston dynamics and compare the resulting distributions of payoffs and optimal exercise times obtained by the two investors. As closed-form solutions for the exercise boundary are unavailable, optimal exercise policies are computed numerically via finite difference methods for variational inequalities, complemented by a probabilistic approach based on the Longstaff--Schwartz algorithm \citep{Longstaff2001} with randomization. 

In our numerical experiments, we first consider a baseline case in which the correlation $\rho$ between asset returns and changes in volatility is negative, as typically observed in equity and index markets, where $\rho < 0$ corresponds to a volatility skew. In line with theoretical expectations, the strategy based on the correctly specified Heston model yields the highest expected discounted payoff. The expected payoff under the Dupire local volatility model exceeds that obtained under the Black--Scholes model. Moreover, exercise tends to occur later under the Heston model than under Black--Scholes. 
We next vary the parameter $\rho$ to assess the impact of correlation on model risk. It is observed that, across all values of $\rho$, the strategy derived from the Heston model yields the highest expected payoff. Notably, for $\rho > 0$, the Black--Scholes model outperforms the Dupire local volatility model, reversing the order observed for $\rho < 0$. Moreover, for $\rho > 0$ exercise is more frequent and the payoff distribution is less skewed than for $\rho < 0$. To explain these observations we examine the correlation-sensitivity of the optimal exercise boundary in the Heston model. Interestingly, as $\rho$ increases, the exercise boundary decreases. Finally, we study calibration strategies. We find that more frequent recalibration of a misspecified model does not necessarily improve performance.

Overall, our results demonstrate that stochastic volatility and the correlation between asset returns and changes in volatility have a pronounced effect on the optimal exercise strategy for American options, thus representing a source of model risk. In particular, perfect calibration to European option prices alone is insufficient to ensure optimal exercise decisions for American-style derivatives in the presence of stochastic volatility. Ongoing recalibration to European option prices also does not bring exercise strategies determined using the wrong model to coincide with the (hypothetical) true optimal strategy, and thus both initial calibration as well as ongoing recalibration fail to mitigate the impact of model risk on exercise strategies.

\subsubsection*{Literature review} We next review the relevant literature on model risk. A first class of approaches accounts for parameter uncertainty arising from statistical estimation. In a Bayesian framework, this leads to a family of models indexed by possible parameter values, and quantities of interest are averaged across models using posterior probabilities as weights. Bayesian model averaging typically yields more stable estimates; however, it does not provide an explicit quantification of model risk. A related but non-Bayesian framework is considered in \citet{Cont2004}, where a distribution of a local volatility surface consistent with observed option prices is calibrated. Alternative methodologies employ stochastic filtering to learn unknown parameters dynamically. For example, \cite{Ekstrom2019} analyze the exercise region of American options when the underlying follows a geometric Brownian motion with unknown stochastic drift.

Worst-case or robust approaches, on the other hand, are closely related to coherent risk measures. These methods, often based on optimal transport theory, evaluate outcomes under the probability measure that generates the most adverse result within a prescribed ambiguity set. Such sets are frequently defined via the Kullback–Leibler divergence (relative entropy). For instance, \cite{Feng2018b} employ relative entropy to quantify model risk in the (re)calibration of European-type option pricing models. An alternative specification relies on the Wasserstein distance, as proposed by \citet{Feng2018}. A worst-case approach to optimal stopping problems is discussed in \cite{Nutz2015}. The literature on superreplication of European-type options, including \cite{Avellaneda1995} and \cite{Avellaneda1996}, also follows a worst-case perspective. \cite{Guo2025} establish the pricing–hedging duality for American-type options in the context of robust pricing and hedging. A common limitation of these approaches is their conservatism, which may result in hedging strategies that are prohibitively costly in practice.
In this paper, we adopt the benchmark methodology proposed in~\cite{Hull2002}. This framework assumes the existence of a true benchmark model used by a market maker to quote option prices. A second institution calibrates a misspecified model to these prices, and its pricing and hedging performance is subsequently evaluated relative to the benchmark model.
\vspace{1ex}

The remainder of the paper is structured as follows: Section \ref{sec:Models} gives a detailed description of the setup used for our analysis; Sections \ref{sec:AmericanOptionPricingOneDim} and \ref{sec:AmericanOptionPricingTwoDim} review the theory of American option pricing and discuss the numerical methods used in our analysis; in Sections \ref{sec:NumericalResultsBase} and \ref{sec:NumericalResultsExt} we discuss the results of our numerical experiments; Section~\ref{sec:Conclusion} concludes.

\section{Setup and Benchmark Methodology for Model Risk}
\label{sec:Models}

We work on a filtered probability space $(\Omega, \mathcal{F}, \mathbb{F}, \mathbb{Q})$, where $\mathbb{F} = (\mathcal{F}(t))_{t \in [0, T]}$ is the augmented filtration that represents the global information and where $\mathbb{Q}$ is a risk-neutral measure. We further assume that there are at least two assets traded continuously in a frictionless market, a riskless bank account earning a continuously compounded constant interest rate $r \geq 0$ and a stock $S$ that does not pay any dividends. We make the following assumption on the dynamics of the stock price. 

\begin{assumption}[Benchmark model]\label{ass:Heston}
The stock price $S = (S(t))_{t \in [0, T]}$ and the instantaneous variance $v = (v(t))_{t \in [0, T]}$ are the solutions to the stochastic differential equations (SDEs) 
\begin{equation} 
\label{eq:Heston}
\begin{split}
 dS(t) &= rS(t)dt + \sqrt{v(t)}S(t) dW_1(t)\,,\\
 dv(t) &= \kappa(\theta - v(t))dt + \sigma_v \sqrt{v(t)}\left(\rho dW_1(t) + \sqrt{1-\rho^2} dW_2(t)\right),
\end{split}
\end{equation}
where $W_1$ and $W_2$ are independent Brownian motions, $\kappa > 0$ is the mean-reversion rate, $\theta > 0$ is the long-term mean of $v$, $\sigma_v > 0$ is the volatility-of-volatility and $\rho \in (-1, 1)$ is the correlation between asset returns and changes in volatility. 
We assume that the Feller condition holds, that is $2\kappa \theta > \sigma_v^2$. 
\end{assumption}
Recall that $(S, v)$ is a Markov process and that the Feller condition implies that $v$ is strictly positive. We assume that the market moreover trades a set of European call options with a fairly wide range of strike and maturity pairs $(K_m, T_m), m \in \{1, \ldots, M\}$. The prices of these options are consistent with the benchmark model~\eqref{eq:Heston} and given by $C(t,S(t),v(t); K_m,T_m)$, where for generic $K$ and $T$
 \begin{equation}
 C(t, S, v; K,T) = \mathbb{E}^{\mathbb{Q}}\left[e^{-r(T-t)}(S(T) - K)^+ | S(t) = S, v(t) = v \right]\,.
\end{equation}

Following the benchmark methodology of \cite{Hull2002}, we consider two investors (banks) who seek to determine the optimal exercise strategy for an American put option, each relying on a different pricing model. The \emph{sophisticated bank} uses the Heston stochastic volatility model, calibrated to observed option prices, and therefore operates under the correctly specified benchmark model with true instantaneous variance $v(t)$. The \emph{unsophisticated bank} employs a simpler one-dimensional model with level-dependent volatility, referred to as the \emph{misspecified model}, which is also calibrated to observed option prices. We first compute the exercise strategy implied by each misspecified model. We then apply this strategy to stock price paths generated under the Heston model and compare the resulting payoff distribution with that obtained from the optimal exercise strategy derived under the correctly specified benchmark model.

We analyze two variants of the misspecified model: first, the classical Black--Scholes model \citep{Black1973} with stock price dynamics
\begin{equation}
\label{eq:BS}
d\tilde{S}(t) = r\tilde{S}(t)dt + \sigma\tilde{S}(t) dW_1(t),
\end{equation}
and second, the Dupire model \citep{Dupire1994}, where $\tilde{S}$ is the solution to the SDE 
\begin{equation}
\label{eq:Dupire}
d\tilde{S}(t) = r\tilde{S}(t)dt + \sigma(t, \tilde{S}(t))\tilde{S}(t) dW_1(t),
\end{equation}
where $\sigma(t, S)$ is a deterministic function of time and the underlying. 

The unsophisticated bank calibrates the model parameters to the observed option prices as follows. In the Black--Scholes model, $\sigma$ is set equal to the at-the-money (ATM) implied volatility, in the Dupire model, the local volatility function $\sigma(t,S)$ is calibrated to all observed option prices. Note that the calibration of the function $\sigma(t,S)$ is an ill-posed problem since only prices for finitely many strike-maturity combinations $(K_m, T_m), m \in \{1, \ldots, M\}$ are available. To deal with this issue, we use the methodology of \cite{Andersen1998}; implementation details are provided in Appendix \ref{sec:AppendixNumericsOneDim}. We consider two scenarios: in Sections~\ref{sec:NumericalResultsBase} and~\ref{sec:Correlation} the misspecified model is calibrated to the option prices from the benchmark model only at the initial date $t=0$; in Section~\ref{sec:RecalibrationBase} the unsophisticated bank recalibrates its model on a recurring basis. This case is more in line with market practice but computationally intensive, so that we restrict ourselves to the Black--Scholes model. 

The optimal exercise rule of the sophisticated bank is given by 
\begin{equation}
\label{eq:OptimalExerciseTime}
\tau^* = \inf \{t \in [0, T]: V(t, S, v) = (K - S)^+\},
\end{equation}
where the function $V(t,S,v)$ denotes the value of an American put option with strike price $K$ and maturity $T$ in the Heston model. The exercise rule employed by the unsophisticated bank, on the other hand, takes the form 
\begin{equation}
\label{eq:MisspecifiedOptimalExerciseTime2}
\tilde{\tau}^* = \inf\{t \in [0, T]: \tilde{V}(t, S) = (K - S)^+\},
\end{equation}
where $\tilde{V}$ represents the value of an American put option in the misspecified model (Black--Scholes or Dupire). We discuss the valuation of American put options in Section~\ref{sec:AmericanOptionPricingOneDim} under the Black--Scholes and Dupire models, and in Section~\ref{sec:AmericanOptionPricingTwoDim} under the Heston model.

\section{American Options in the Black--Scholes and Dupire Model}
\label{sec:AmericanOptionPricingOneDim}

In the following, we briefly review the American put pricing problem for the one-dimensional Black--Scholes and Dupire models. In both models, the value of an American put is given as
\begin{equation}
\label{eq:MisspecifiedOptimalStoppingProblem}
\tilde{V}(t, S) = \sup_{\tau \in \mathcal{T}(t)} \mathbb{E}^\mathbb{Q}\left [ e^{-r(\tau - t)} (K - \tilde{S}(\tau))^+ \mid \tilde{S}(t) = S \right ],
\end{equation}
where $\mathcal{T}(t)$ denotes the set of stopping times with respect to $\mathbb{F}$ with values in $[t, T]$. The discounted value of an American option is thus given as the Snell envelope of the discounted payoff function. Clearly, $\tilde V(t,S) \ge (K-S)^+$, and we define the exercise region $\tilde{\mathcal{E}}$ as 
$$
\tilde{\mathcal{E}} = \{(t, S) \in [0, T] \times \mathbb{R}^+: \tilde{V}(t, S) = (K - S)^+ \},
$$
and its complement, the continuation region $\tilde{\mathcal{C}}$, as
$$
\tilde{\mathcal{C}} = \{(t, S) \in [0, T] \times \mathbb{R}^+: \tilde{V}(t, S) > (K - S)^+ \}.
$$
It is well known that the optimal exercise time $\tilde \tau^*$ for the Black--Scholes or Dupire model is given by \eqref{eq:MisspecifiedOptimalExerciseTime2}, that is, by the first hitting time of the exercise region $\tilde{\mathcal{E}}$. 

For the Black--Scholes model, it has been proven that for every $t \in [0,T]$ there is exactly one value of the state variable that separates $\tilde{\mathcal{C}}$ from $\tilde{\mathcal{E}}$, that is, the set 
$\tilde{\mathcal{E}}_t:= \{S\ge 0\colon (t,S ) \in \tilde{\mathcal{E}} \}$ is of the form $[0,\tilde B(t)]$ for some threshold value $\tilde B(t) \in [0,K]$
\cite[Proposition 2.1.3, p. 3]{Jacka1991}. We next show that this holds also in the context of the Dupire model. We make use of the following result \cite[Corollary 2.6, p. 269]{Ekstrom2004}. 

\begin{proposition}[Convexity of $\tilde{V}$]
\label{cor:Convexity}
Assume that $\sigma(t, S)$ is locally Hölder$\left(\frac{1}{2}\right)$ in $S$ on $[0, T] \times \mathbb{R}^+ $ and that there exists a constant $C$ such that $\lvert \sigma(t, S)\rvert \leq C(1 + S^{-1})$ for all $t \le T$. Then, convexity and boundedness of the payoff function of an American put imply that $\tilde{V}$ is convex in $S$. 
\end{proposition}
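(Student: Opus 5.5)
The plan is to reduce the statement to preservation of convexity by one-dimensional diffusions, as in Ekström (2004). I would proceed in three steps: a European convexity result, an extension to Bermudan options by backward induction, and a passage to the continuous-time limit.

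\textbf{Step 1 (European case).} Set $\hat S = e^{-rt}\tilde S$, which is a martingale with $d\hat S = \sigma(t,e^{rt}\hat S)\hat S\,dW_1$. The discount factor and drift then only rescale the argument, and convexity in $S$ is invariant under $S\mapsto e^{rt}S$. The key claim is that for any convex payoff $g$ with $|g(S)|\le A+BS$, the map $S\mapsto u(t,S)=\mathbb{E}[g(\hat S(T))\mid \hat S(t)=S]$ is convex.

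To prove this, I would use a coupling argument rather than PDE regularity, since $\sigma$ is only locally Hölder$(\tfrac12)$. Take $S_1<S_2<S_3$ with $S_2=\lambda S_1+(1-\lambda)S_3$. Construct the three solutions driven by time-changed Brownian motions in the manner of Hobson's coupling, or show directly that $u$ satisfies $u(t,S_2)\le \lambda u(t,S_1)+(1-\lambda)u(t,S_3)$. The Hölder$(\tfrac12)$ condition gives pathwise uniqueness via the Yamada–Watanabe criterion, and hence the comparison theorem $\hat S^{S_1}\le \hat S^{S_2}\le \hat S^{S_3}$. The growth bound $\sigma(t,S)S\le C(1+S)$ gives non-explosion, integrability, and the true-martingale property. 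The route I would try first is Ekström's: the diffusion coefficient $\eta(t,S)=\sigma S$ is one-dimensional, so the law of $\hat S(T)$ is a mixture of laws obtained by running a Brownian motion to a stopping time. Convexity then follows from Jensen/Tanaka arguments applied to the local-time representation $g(\hat S_T)=g(S)+\int g'\,d\hat S+\tfrac12\int L^a_T\,g''(da)$, where monotonicity of the local time in the starting point is the delicate part.

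\textbf{Step 2 (Bermudan approximation).} Fix a grid $t=t_0<\dots<t_n=T$ and define $\tilde V_n$ by backward induction:
\[
\tilde V_n(t_n,S)=(K-S)^+,\qquad \tilde V_n(t_k,S)=\max\bigl\{(K-S)^+,\ e^{-r\Delta}\mathbb{E}[\tilde V_n(t_{k+1},\tilde S(t_{k+1}))\mid \tilde S(t_k)=S]\bigr\}.
\]
The put payoff is convex and bounded by $K$. By Step 1 each conditional expectation of a convex bounded function is convex, and a maximum of convex functions is convex. Induction therefore gives convexity of $\tilde V_n(t_k,\cdot)$. Boundedness is preserved because every function involved lies between $0$ and $K$, which keeps the growth condition needed in Step 1 in force at each induction step.

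\textbf{Step 3 (limit).} Along dyadic refinements, $\tilde V_n(t,S)\uparrow \tilde V(t,S)$. This holds because stopping times can be approximated by grid-valued ones, and the continuity of paths together with the boundedness of the payoff allows dominated convergence. A pointwise limit of convex functions is convex, which proves the proposition.

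The main obstacle is Step 1: convexity preservation for a merely Hölder, possibly singular-at-zero coefficient (note the $S^{-1}$ in the bound). I would handle it by first proving it for smooth, uniformly elliptic $\sigma$ via the PDE for $w=\partial_{SS}u$, which satisfies a linear parabolic equation with $w(T)\ge0$ and therefore, by the maximum principle, $w\ge 0$. I would then approximate the given $\sigma$ locally uniformly by smooth coefficients. Under pathwise uniqueness, stability of the SDE solutions gives convergence in law, so the convexity inequality passes to the limit.
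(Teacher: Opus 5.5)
The paper gives no proof of its own and simply imports the result from \citet[Corollary~2.6]{Ekstrom2004}. Your plan (European convexity preservation, Bermudan backward induction, monotone dyadic limit) is essentially the argument behind that cited result.

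For Step~1, rely only on your last route: smooth coefficients and payoffs, the maximum principle for $\partial_{SS}u$, and passage to the limit by stability under pathwise uniqueness, with $\tilde S$ absorbed at $0$ because $S\sigma(t,S)$ need not vanish there. The comparison theorem only orders the paths, and the Tanaka/local-time route is circular: $S\mapsto\mathbb{E}_S[L^a_T]=\mathbb{E}_S|\hat S(T)-a|-|S-a|$ is neither monotone nor convex in $S$ (it has a concave kink at $a$), so that representation only reduces the claim to the payoffs $|x-a|$ rather than proving it.
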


\begin{corollary}[Unique Optimal Exercise Boundary]
\label{prop:boundaryDupire}
Suppose the assumptions of Proposition~\ref{cor:Convexity} are satisfied. Then, for the Dupire model given in~\eqref{eq:Dupire}, for every $t \in [0,T]$ there is some $\tilde B(t) \in [0,K]$ such that $\tilde{\mathcal{E}}_t =[0,\tilde{B}(t)]$.
\end{corollary}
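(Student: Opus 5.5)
The plan is to fix $t\in[0,T]$ and study the difference $h(S):=\tilde V(t,S)-(K-S)^+$ on $S\ge 0$. By definition, $\tilde{\mathcal{E}}_t=\{S\ge 0: h(S)=0\}$. The claim is that this zero set is an interval $[0,\tilde B(t)]$ with $\tilde B(t)\le K$. Three ingredients are needed: convexity of $\tilde V(t,\cdot)$ from Proposition~\ref{cor:Convexity}, the elementary bounds $(K-S)^+\le \tilde V(t,S)\le K$, and the fact that $\tilde V$ is nonincreasing in $S$.

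First I will show that no point with $S\ge K$ lies in $\tilde{\mathcal{E}}_t$, except in the degenerate case $t=T$. For $S\ge K$ the payoff is zero. Under the Dupire dynamics~\eqref{eq:Dupire} with nondegenerate $\sigma$, the probability that $\tilde S(T)<K$ is positive, so $\tilde V(t,S)\ge e^{-r(T-t)}\mathbb{E}[(K-\tilde S(T))^+]>0$. If $\sigma$ may vanish, I fall back on the trivial observation that the set $\{S\le K: h(S)=0\}$ is contained in $[0,K]$. Any points with $S\ge K$ can then be absorbed by setting $\tilde B(t)=\sup(\tilde{\mathcal{E}}_t\cap[0,K])$. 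For $t=T$ the statement is immediate with $\tilde B(T)=K$, since $\tilde V(T,\cdot)=(K-\cdot)^+$ and on $[K,\infty)$ both sides vanish; at $t=T$ the "interval" should be read up to that trivial region. Closedness of $\tilde{\mathcal{E}}_t$ follows from continuity of $\tilde V$ in $S$, which holds because $\tilde V$ is convex and finite.

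The key step is to show that $\tilde{\mathcal{E}}_t\cap[0,K]$ is an interval containing $0$. On $[0,K]$ the payoff is the affine function $K-S$, so $h$ is convex there. I will use two facts about $\tilde V$.

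The first is $\tilde V(t,0)=K$: the state $0$ is absorbing for~\eqref{eq:Dupire}, since the growth bound $|\sigma(t,S)|\le C(1+S^{-1})$ keeps $\sigma S$ bounded near zero, and immediate exercise is then optimal because $r\ge 0$. Hence $h(0)=0$.

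The second is a slope bound. Since $\tilde V\le K$ everywhere and $\tilde V(t,\cdot)$ is convex, it is nonincreasing, as a convex function bounded above on $[0,\infty)$ must be. Its right derivative at $0$ is therefore at least $-1$. Indeed, $\tilde V(t,S)\ge K-S$ together with $\tilde V(t,0)=K$ gives $(\tilde V(t,S)-\tilde V(t,0))/S\ge -1$.

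Now suppose $0\le S_1<S_2\le K$ with $h(S_2)=0$. Convexity of $h$ on $[0,S_2]$, together with $h(0)=h(S_2)=0$, gives $h(S_1)\le 0$. Since $h\ge 0$, this forces $h(S_1)=0$. So $\tilde{\mathcal{E}}_t\cap[0,K]$ is a closed interval $[0,\tilde B(t)]$.

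The step I expect to be the main obstacle is justifying $\tilde V(t,0)=K$ and the boundary behaviour at $0$ for the Dupire dynamics. This requires showing that the solution started at $0$ stays at $0$, or at least that $\tilde V(t,S)\to K$ as $S\downarrow 0$, under only the Hölder and growth conditions of Proposition~\ref{cor:Convexity}. My first attempt would be a direct estimate: $\tilde V(t,S)\ge K-S$ gives $\liminf_{S\downarrow 0}\tilde V(t,S)\ge K$, and $\tilde V\le K$ gives the reverse inequality. Continuity of the convex function $\tilde V$ then yields $h(0)=0$ by extension, which avoids any need to analyse the SDE at the boundary.
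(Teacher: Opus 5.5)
Your argument is essentially the paper's. Convexity of $\tilde V(t,\cdot)-(K-\cdot)$ together with $0\in\tilde{\mathcal E}_t$ forces the exercise set to be an interval starting at $0$; the paper phrases this as convexity of the sublevel set $\{\tilde H_t\le 0\}$, and you phrase it as a chord inequality on $[0,K]$. Your separate treatment of $S\ge K$, via $\tilde V(t,S)>0$ for $t<T$, supplies a step the paper skips when it replaces $(K-S)^+$ by $K-S$ using only $\tilde V\ge 0$.

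A few minor points:
\begin{itemize}
\item The step you call the main obstacle is immediate, since $K=(K-0)^+\le\tilde V(t,0)\le K$. Your absorption argument is not valid as stated: boundedness of $\sigma(t,S)S$ near $0$ does not make $0$ absorbing.
\item The monotonicity of $\tilde V(t,\cdot)$ and the slope bound are never used.
\item Points $S>K$ cannot be ``absorbed'' when $\sigma$ degenerates or when $t=T$. For $\sigma\equiv 0$, for instance, one gets $\tilde{\mathcal E}_t=[0,\infty)$. So in those cases the statement itself needs the extra hypotheses $t<T$ and nondegenerate $\sigma$.
\end{itemize}
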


\begin{proof}
By definition, $\tilde{V}(t,S) \geq 0$. We may thus rewrite the Dupire exercise region as 
$$\tilde{\mathcal{E}} = \{(t, S) \in [0, T] \times \mathbb{R}^+: \tilde{V}(t, S) \leq K - S \}.$$ 
For fixed $t \in [0, T]$ let $\tilde{H_t}(S) := \tilde{V}(t, S) - (K - S)$. By Proposition~\ref{cor:Convexity}, $\tilde{V}$ is convex in $S$ and, by linearity of $K-S$, $\tilde{H}_t(\cdot)$ is convex as well. 
Next, fix some $t_0$ in $[0,T]$. Then 
$\tilde{\mathcal{E}}_{t_0}$ is equal to $ \{S\geq 0\colon \tilde{H}_{t_0} (S) \leq 0 \}.$ This set is convex by the convexity of the function $\tilde H_{t_0}$ and hence an interval. Moreover, $0 \in \tilde{\mathcal{E}}_{t_0}$, which shows that $\tilde{\mathcal{E}}_{t_0}$ is of the form $ [0, \tilde{B}(t_0)]$. 
\end{proof}

In the sequel we call the mapping $t\mapsto \tilde B(t)$ the \emph{exercise boundary}. Note that by the above results we may rewrite the optimal exercise time as 
\begin{equation}
\label{eq:MisspecifiedOptimalExerciseTime}
\tilde{\tau}^* = \inf \{t \in [0, T]\colon S(t) \in \tilde{\mathcal{E}_t} \} = \inf\{t \in [0, T]\colon S(t) \leq \tilde{B}(t)\}.
\end{equation}
In our model risk analysis, we will apply this exercise rule to stock prices coming from the benchmark model. Unfortunately, there is no closed-form solution for the exercise boundary; therefore we must resort to numerical methods. 

\begin{remark}
The following qualitative properties have been established for the exercise boundary in the Black--Scholes model.
\begin{enumerate}
    \item $\tilde{B}$ is continuously differentiable on $[0, T)$ \cite[Lemma 4.1]{Myneni1992}.
    \item $\tilde{B}$ is increasing in $t \in [0, T]$ \cite[Proposition 2.2.2]{Jacka1991}.
    \item We have $\tilde{B}(T) = \lim_{t \rightarrow T} \tilde{B}(t) = K$ \cite[p. 560]{Kim1990}.
\end{enumerate}
For the Dupire model, far less is known. We are only aware of \citet[Corollary 3.2, p. 7]{DeMarco2017}, who give an expression for the exercise boundary in the limit as $t \to T$ for the case where $\sigma(t, S) \equiv \sigma(S)$. 
\end{remark}

Next, we discuss the characterization of the American put price via variational inequalities. We focus on the Dupire model; results for the Black--Scholes model correspond to the special case $\sigma(t,S) \equiv \sigma$. 
We consider the logarithmic stock price $\tilde{X}:=\log(\tilde{S})$. Using It\^o's formula, we get that 
$$
d\tilde{X}(t) = w(t, \tilde{X}(t))dt + \sqrt{v(t, \tilde{X}(t))}dW_1(t),
$$
where $w(t, \tilde{X}) = r - \frac{1}{2}\sigma(t, \exp(\tilde{X}))^2 $ and $v(t, \tilde{X}) = \sigma(t, \exp(\tilde{X}))^2$. 

For the following Theorem, assume the Dupire model is such that $\sigma(t, \exp(\tilde{X})): [0, T] \times \mathbb{R}^+ \rightarrow \mathbb{R}^+$ satisfies the usual Lipschitz and linear growth conditions. 
\begin{theorem}
\label{thm:ViscosityOneDim}
The function $\tilde{U}(t, \tilde{X}) = \tilde{V}(t, \exp(\tilde{X}) )$ is a viscosity solution to the variational inequality 
\begin{equation}
\label{eq:var-ineq-Dupire}
\max\left\{ \frac{\partial \tilde{U}}{\partial t} + \tilde{\mathcal{L}}\tilde{U} - r\tilde{U}, (K - \exp(\tilde{X}))^+ - \tilde{U}  \right\} = 0
\end{equation}
with terminal condition $\tilde{U}(T, \tilde{X}) = (K - \exp(\tilde{X}))^+$, where $\tilde{\mathcal{L}}:= w(t, \tilde{X})\frac{\partial }{\partial \tilde{X}} + \frac{1}{2} v(t, \tilde{X})\frac{\partial^2 }{\partial \tilde{X}^2}$ is the generator of $\tilde{X}$. Moreover, a comparison principle holds for the variational inequality \eqref{eq:var-ineq-Dupire}.
\end{theorem}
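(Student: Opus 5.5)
This follows the standard dynamic-programming route for optimal stopping of diffusions (as in Pham's monograph or Øksendal--Reikvam). We establish the viscosity property through the dynamic programming principle (DPP) and cite a comparison theorem for obstacle problems with linear-growth data.

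\textbf{Preliminaries.} Write the problem in log coordinates as
\[
\tilde U(t,x)=\sup_{\tau\in\mathcal T(t)}\mathbb E\bigl[e^{-r(\tau-t)}g(\tilde X^{t,x}(\tau))\bigr],\qquad g(x)=(K-e^x)^+ .
\]
The payoff $g$ is bounded by $K$ and Lipschitz with constant at most $K$. The coefficients $w$ and $\sqrt v$ are Lipschitz with linear growth. Hence $0\le \tilde U\le K$, and the standard moment estimate $\mathbb E\sup_{s\in[t,T]}|\tilde X^{t,x}(s)-\tilde X^{t,x'}(s)|\le C|x-x'|$ makes $\tilde U$ Lipschitz in $x$. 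For continuity in $t$, compare $\tau$ with $\tau\vee t'$ and use $\mathbb E|\tilde X(s)-x|\le C(1+|x|)|s-t|^{1/2}$; this gives $\frac12$-Hölder continuity in $t$. Therefore $\tilde U$ is continuous and the terminal condition holds. We then state the DPP: for any stopping time $\theta\in\mathcal T(t)$,
\[
\tilde U(t,x)=\sup_{\tau}\mathbb E\bigl[e^{-r(\tau-t)}g(\tilde X(\tau))\mathbf 1_{\tau<\theta}+e^{-r(\theta-t)}\tilde U(\theta,\tilde X(\theta))\mathbf 1_{\tau\ge\theta}\bigr].
\]
Since $\tilde U$ is continuous, this follows from the Markov property, with an $\varepsilon$-optimal measurable selection over a countable partition of the state space.

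\textbf{Viscosity property.} We verify the two halves separately.

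For the \emph{supersolution} property, take a smooth $\varphi\le\tilde U$ touching $\tilde U$ at $(t_0,x_0)$ with $t_0<T$. First, $\tilde U\ge g$ holds because $\tau=t$ is admissible. Second, take $\theta=(t_0+h)\wedge\tau_{\mathrm{exit}}$ and use the suboptimal choice $\tau\ge\theta$ in the DPP. Apply Itô's formula to $e^{-r s}\varphi$, divide by $h$ and let $h\downarrow0$. This gives $\partial_t\varphi+\tilde{\mathcal L}\varphi-r\varphi\le0$, so
\[
\max\bigl\{\partial_t\varphi+\tilde{\mathcal L}\varphi-r\varphi,\;g-\tilde U\bigr\}\le 0 .
\]

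For the \emph{subsolution} property, take $\varphi\ge\tilde U$ touching at $(t_0,x_0)$. If $\tilde U(t_0,x_0)=g(x_0)$ there is nothing to show. Otherwise argue by contradiction: suppose $\tilde U>g$ at $(t_0,x_0)$ and $\partial_t\varphi+\tilde{\mathcal L}\varphi-r\varphi<-\delta$ there. By continuity, both strict inequalities persist on a small ball $B$, with $\tilde U\ge g+\eta$ on $B$. Take an $\varepsilon$-optimal $\tau$ in the DPP with $\theta$ the exit time from $B$. On $\{\tau<\theta\}$ stopping is $\eta$-suboptimal. On the remaining part, Itô's formula applied to $\varphi$ yields a loss proportional to $\delta\,\mathbb E[\theta-t_0]>0$. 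Choosing $\varepsilon$ small gives a contradiction, which establishes the viscosity solution property.

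\textbf{Comparison principle and main obstacle.} The main obstacle is the comparison principle: if $u$ is a u.s.c. subsolution and $w$ an l.s.c. supersolution, both of polynomial growth, with $u(T,\cdot)\le w(T,\cdot)$, then $u\le w$. I plan to cite the standard result for parabolic obstacle problems with Lipschitz coefficients (Crandall--Ishii--Lions; Pham, Thm.~4.4.5) rather than reprove it, and only check its hypotheses.
\begin{itemize}
\item Unbounded domain: replace $u$ by $e^{\lambda t}u$ and add a penalisation $\varepsilon e^{-\lambda t}(1+x^2)$ to $w$. This localises maxima, using the linear growth of $w$ and $\sqrt v$.
\item Doubling of variables: use the test function $\frac{1}{2\alpha}|x-y|^2$ together with Ishii's lemma. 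The second-order terms are then controlled by the Lipschitz bound $|\sqrt{v(t,x)}-\sqrt{v(t,y)}|^2\le L^2|x-y|^2$, and the first-order terms by the Lipschitz bound on $w$.
\item Obstacle term: if the subsolution is in the obstacle branch at the maximum point, then $u\le g\le w$ there directly.
\end{itemize}
Finally, uniqueness in the class of functions of linear growth follows from comparison. Because $\tilde U$ is bounded, it lies in this class, so it is the unique viscosity solution.
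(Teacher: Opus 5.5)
Your proposal is correct and takes essentially the same route as the paper. The paper gives no argument of its own and simply cites the standard result in Touzi (2010), which is proved exactly by your DPP-based verification of the sub- and supersolution properties plus a Crandall--Ishii--Lions comparison argument for the obstacle problem. One point to state explicitly is that Lipschitz continuity and linear growth of $\sigma$ do not by themselves give the same for the drift $w = r - \frac{1}{2}\sigma^2$ (boundedness of $\sigma$, as for the calibrated surface with $\sigma \in [0.01,6]$, suffices), so like the paper's citation you are reading the hypothesis as applying to both coefficients of the log-price SDE.
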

This is a standard result, see for instance \citet[Theorem 3.5, Chapter 3, p. 41]{Touzi2010}. 
Theorem \ref{thm:ViscosityOneDim} provides theoretical justification for approximation methods based on the variational inequality, such as the Brennan-Schwartz algorithm \citep{Brennan1977}. Under this approach, one uses standard finite difference (FD) methods to approximate the PDE part of \eqref{eq:var-ineq-Dupire}. At each time step, one checks where exercise is beneficial and adapts the value accordingly. The early exercise boundary is then implicitly determined as the maximal grid point at which early exercise is beneficial. Details of our implementation can be found in Appendix~\ref{sec:AppendixNumericsOneDim}.

\section{American Option Pricing in the Heston Model}
\label{sec:AmericanOptionPricingTwoDim}

In the following, we review the American put pricing problem for the Heston model and discuss our numerical approximation method.

\subsection{Theory}
\label{sec:AmericanOptionPricingTwoDimTheory}
In the Heston model, the price of the American option is given by
\begin{equation} 
\label{eq:price-Heston}
V(t, S, v) = \sup_{\tau \in \mathcal{T}(t)} \mathbb{E}^\mathbb{Q}\left[e^{-r(\tau - t)} (K - S(\tau))^+ \rvert S(t) = S, v(t) = v\right].
\end{equation}
Note that definition \eqref{eq:price-Heston} implicitly assumes that the martingale measure $Q$ can be determined uniquely from the observed option prices. As before, we define the \emph{exercise region} $\mathcal{E}$ as
$$
\mathcal{E} = \{(t, S, v) \in [0, T] \times \mathbb{R}^+ \times \mathbb{R}^+: V(t, S, v) = (K - S)^+\};
$$
its complement $\mathcal{C}$ is the \emph{continuation region}.
The optimal exercise time is given by $\tau^* = \inf \{t \in [0,T] \colon S(t) \in \mathcal{E}\}$. \citet[Proposition 4.1, p. 6]{Lamberton2019} prove that $V$ is convex in $S$. A similar argument as in Corollary~\ref{prop:boundaryDupire} therefore shows that there exists an optimal exercise boundary $B:[0, T] \times \mathbb{R}^+ \rightarrow \mathbb{R}^+$ that separates the continuation region from the exercise region in a way such that one may write 
$$
\mathcal{E} = \{(t, S, v) \in [0, T] \times \mathbb{R}^+ \times \mathbb{R}^+: S \leq B(t, v)\}.
$$
We may thus express the optimal exercise time in the form $\tau^* = \inf \{t \in [0, T]: S(t) \leq B(t, v(t))\}.$
\citet[Proposition 4.4, p. 9]{Lamberton2019} show that the optimal exercise boundary under the Heston model is nonincreasing and left-continuous in the volatility variable, and nondecreasing and right-continuous in time, provided that $\rho \in (-1, 1)$.

\citet[p. 416]{Touzi1999} gives a characterization of $V$ as the unique viscosity solution of the variational inequality 
\begin{equation}\label{eq:var-ineq-Heston}
\max\left\{ \frac{\partial V}{\partial {t}} + \mathcal{L}V  -rV, (K - S)^+ - V  \right\} = 0
\end{equation}
with terminal condition $V(T, S, v) = (K - S)^+$, where $\mathcal{L}:= rS\frac{\partial }{\partial S} + \kappa(\theta - v)\frac{\partial }{\partial v} +\frac{1}{2}vS^2 \frac{\partial^2 }{\partial S^2} + \frac{1}{2}\sigma_v^2 v \frac{\partial^2 }{\partial v^2} + \rho \sigma_v vS \frac{\partial^2 }{\partial S \partial v}$ is the generator of $(S, v)$.   

\subsection{Numerics}
\label{sec:AmericanOptionPricingTwoDimNumerics}

As in the one-dimensional case, we make use of the Brennan-Schwartz algorithm for the valuation of the American put option. We thus utilize the variational inequality in order to approximate the value and implicitly obtain an estimate of the optimal exercise boundary. Our approximation approach for the PDE is based on a splitting scheme of the Alternating Direction Implicit (ADI) type. Originally, ADI schemes were not developed to deal with a mixed spatial derivative as is present in the Heston PDE. \cite{Hout2010} address this and consider adaptions of numerous important ADI schemes to deal with arbitrary correlations within the Heston framework. We follow their approach for the approximation of the PDE. 
We utilize the following boundary conditions: 
\begin{itemize}
    \item $\lim_{S \rightarrow \infty^-} V(t, S, v) = 0$,
    \item $\lim_{v \rightarrow \infty^-} V(t, S,v) = K$,
    \item $V(t, 0, v) = K$.
\end{itemize}

\cite{Hout2010} follow the method-of-lines, in which one first discretizes space and then time. In order to discretize the state variables, we make use of non-uniform grids that are given below. We start off with $m_1 + 1$ equidistant points $\xi_0 < \xi_1 < \ldots < \xi_{m_1}$ given by 
$$
\xi_i = \sinh^{-1}(-K/c) + i\cdot\Delta_\xi \text{ for } i = 0, 1, \ldots, m_1,
$$
with $\Delta_\xi = \frac{1}{m_1} \left[\sinh^{-1}((s_{\max}-K)/c) - \sinh^{-1}(-K/c) \right]$ and and some constant $c > 0$. We then obtain a non-uniform mesh $s_0 = 0 < s_1 < \ldots < s_{m_1} = s_{\max}$ by setting
$$
s_i = K + c \sinh(\xi_i) \text{ for } i = 0, 1, \ldots, m_1.
$$
The parameter $c$ controls the fraction of mesh points that lie in the neighborhood of the strike price $K$. One would generally like to have enough mesh points in the neighborhood of the strike price because the initial condition has a discontinuity in its first derivative when $S = K$. We need to ensure that we are also fine enough for lower values of this state variable, as our approximation of the exercise boundary is based on this grid as well. 

We set up the non-uniform mesh for our second state variable in a similar way. It is desirable to employ a sufficiently fine mesh in the neighborhood of $0$, as the Heston PDE becomes convection-dominated for $v \approx 0$. We consider $m_2 + 1$ equidistant points given by 
$$
\zeta_j = j \cdot \Delta_\zeta \text{ for } j = 0, 1, \ldots, m_2,
$$ 
with $\Delta_{\zeta} = \frac{1}{m_2} \sinh^{-1}(v_{\max}/d)$ for some constant $d > 0$. Then the mesh $v_0 = 0 < v_1 < \ldots < v_{m_2} = v_{\max}$ is defined by
$$
v_j = d \sinh(\zeta_j) \text{ for } j = 0, 1, \ldots, m_2.
$$

We make use of downward, central and upward schemes in order to approximate the derivatives in the state variables; details are given in Appendix \ref{sec:AppendixFDSchemes}. We follow the approach outlined in \citet[p. 3018]{Hout2010} for choosing the FD scheme.
\begin{itemize}
    \item At the outflow boundary $v = 0$ we approximate $\frac{\partial V}{\partial v}$ using the upward FD scheme (given in Equation \eqref{eq:Upwind2FirstDeriv} in the Appendix). Technically, we would do so as well for $\frac{\partial^2 V}{\partial v^2}$ and use a mix of a central and upward scheme for $\frac{\partial^2 V}{\partial S \partial v}$, but all these parts vanish for $v = 0$. All remaining derivatives are approximated using a central scheme.
    \item Whenever $v \in (0, 1)$ we use central schemes only. 
    \item In the region where $v > 1$ we apply the downward scheme (given in Equation \eqref{eq:Upwind1FirstDeriv} in the Appendix) for $\frac{\partial V}{\partial v}$ to avoid spurious oscillations in the FD solution that could happen when $\sigma_v$ is close to zero. All the other derivatives are approximated using a central scheme. Within our approximations we require that $v_{\max} > 1$.
\end{itemize}

This state discretization leads to a system of stiff Ordinary Differential Equations (ODEs) 
$$V'(\hat{t}) = AV(\hat{t}) + b(\hat{t}) \text{ for } 0 \leq \hat{t} \leq T,$$ 
with initial condition $V(0) = V_0$\footnote{Following \citet{Hout2010}, we have transformed the PDE into an initial value problem by working with $\hat{t} = T-t$.}, where $A$ is a $m \times m$ matrix and $m:= (m_1 - 1)\cdot m_2$. The entries of $V(\hat{t})$ are approximations to the exact PDE solution $V(\hat{t}, S, v)$ at the spatial grid points $(S, v)$. Clearly, we need to order these grid points in a convenient way. We do so by first fixing one $v$-grid point and ranging over all possible $s$-grid values, and then moving on to the next $v$-grid point and proceeding in this manner until we reach $(s_{\max}, v_{\max})$. 
We now turn to the time discretization. For this, we fix some $\Delta_t > 0$ and define the time grid as $t_n = n \Delta_t, n = 0, 1, 2\ldots,m_3$. It is common to use the Crank-Nicolson scheme to numerically solve stiff initial value problems for systems of ODEs of the form $V'(\hat{t}) = F(\hat{t}, V(\hat{t}))$. However, because dimensionality quickly explodes in our case, such methods can become ineffective. This is where splitting schemes of the ADI-type become helpful. We thus decompose the matrix $A$ into three sub-matrices $A = A_0 + A_1 + A_2$, where 
\begin{itemize}
    \item matrix $A_0$ contains the discretization of the mixed derivative.
    \item matrix $A_1$ contains the discretization of the spatial derivatives in the $S$-direction. 
    \item matrix $A_2$ contains the discretization of the spatial derivatives in the $v$-direction.
 \end{itemize}
We split the $-rV$ term of the PDE equally between $A_1$ and $A_2$. In a similar logic, we decompose the boundary vector and define $F_j(\hat{t}, w) = A_j w + b_j(\hat{t})$ for $j = 0, 1, 2$ and $\hat{t} \in [0, T], w \in \mathbb{R}^{m}$, meaning that we split $V'(\hat{t}) = F(\hat{t}, V(\hat{t})) = F_0(\hat{t}, V(\hat{t})) + F_1(\hat{t}, V(\hat{t})) + F_2(\hat{t}, V(\hat{t}))$. We provide further details in Appendix \ref{sec:AppendixNumericsHeston}. 

\citet[p. 309-310]{Hout2010} formulate numerous splitting schemes for the underlying initial value problem; we make use of the Modified Craig-Sneyd (MCS) scheme in which $\lambda_2$ is a real parameter. The MCS scheme begins with a forward Euler step, followed by an adaption first in the $S$-direction and then in the $v$-direction. After these unidirectional corrections, the value is adjusted in the cross-direction. This is followed by a second predictor step, after which two unidirectional corrector steps are applied again. We note the exact scheme in Appendix \ref{sec:AppendixNumericsHeston}. According to \cite[p. 312]{Hout2010} the MCS scheme is unconditionally stable for two-dimensional pure diffusion equations with a mixed derivative whenever $\lambda_2 \geq \frac{1}{3}$, while in general two-dimensional equations with convections such stability results are currently lacking. The regular Craig-Sneyd (CS) scheme is unconditionally stable for two-dimensional convection-diffusion equations with a mixed derivative when its parameter is at least $\frac{1}{2}$. For the case where $\lambda_2 = \frac{1}{2}$, the MCS scheme reduces to the CS scheme, which is consequently unconditionally stable in this case. Within our computations we obtained similar results when setting $\lambda_2 = \frac{1}{2}$ as when setting $\lambda_2 = 0.4$, which is the parameter we have chosen in our final approximation. We thus have no reason to doubt the stability of our implemented MCS scheme. The MCS scheme is of order two for any given $\lambda_2$.

In a last step, we then check whether early exercise would be beneficial. For each $v$-grid point we search for the highest $S$-grid point where early exercise takes place. In doing so, we obtain an approximation of the optimal exercise boundary - specifically for each discretized $v$-value we will have a critical stock price path. Since it is computationally infeasible to make the $v$-grid fine enough for our purposes, we employ interpolation, where we make use of cubic splines with a natural end condition. 

\begin{remark}
    The numerical results in this paper are based on variational inequalities and the Brennan-Schwartz algorithm. We also experimented with simulation-based methods based on the Longstaff-Schwartz algorithm \citep{Longstaff2001} coupled with a randomization procedure first developed in \citet{Carr1998}. The results were qualitatively similar but slightly less accurate. We provide some details in Appendix \ref{sec:AppendixLSBase}. 
\end{remark}

\section{Numerical experiments}
\label{sec:NumericalResultsBase}
In this section, we present numerical results for a base case, where the correlation between asset returns and instantaneous variance is fixed to $\rho =-0.5$ and where the misspecified models (Black--Scholes and Dupire) are not recalibrated. Variations in $\rho$ and recalibration is studied in Section~\ref{sec:NumericalResultsExt}. The parameter values used are given in Table~\ref{tab:params-base}. These values are largely taken from a standard case frequently considered in the literature; see for instance \citet{Zvan1998}, \citet{Clarke1999}, \citet{Oosterlee2003}, \citet{Ikonen2008} and \citet{Vellekoop2009}. Note, however, that we take a negative value $\rho=-0.5$ for the correlation between asset returns and changes in volatility, since a negative correlation is typical for equity (index) markets. Specifically, the volatility 'skew' or 'smile', is a result of the violation of returns following a normal, and thus non-skewed, distribution. This volatility skew can be produced by allowing for a nonzero correlation between spot returns and volatility changes \citep[p. 156]{Gatheral2011}. Empirically, it is observed that equity returns are typically negatively skewed \citep[p. 81, 83]{Corrado1997}, corresponding to negative correlation under the Heston model \citep[p. 336-339]{Heston1993}. 
The parameters $K$ and $T$ denote strike price and maturity of the American put; note that the choice $K=S(0)$ implies that we consider an American put which is initially ATM. 

\begin{table}[h!]
\centering
\begin{tabular}{l | c c c c c c c c c } 
Parameter & $\kappa$ & $\theta$ & $\sigma$ & $\rho$  & $r$   & $S(0)$ & $v(0)$ & $K$ & $T$ \\ \hline
Value     & $5$      &  $0.16$  &  $0.9$   & $-0.5$ & $0.1$ & $10$   & $0.25^2$& $10$ &$1$ \\
\end{tabular}
\vspace{0.2 cm}
\caption{Parameter values for the base case.}
\label{tab:params-base}
\end{table}

\subsection{Calibration of the Black--Scholes and Dupire Model}
\label{sec:Calibration}

We compute European option prices for $100$ strike and maturity pairs $(K_m, T_m), m \in \{1, \ldots, 100\}$, where we consider four different maturities, $0.25, 0.5, 0.75, 1$, and $25$ different strikes, $7, 7.25, \ldots, 12.75, 13$. The one-dimensional models are calibrated to these prices at the initial date $t=0$. More precisely, 
the Black--Scholes model is calibrated to the call price of the strike and maturity pair $(10, 1)$, which leads to $\sigma = 0.3708353$. Further, the Dupire model is calibrated to all strike and maturity pairs. We make use of the approach developed in \cite{Andersen1998}, using a semi-implicit approach and a bicubic spline interpolation for the observed option prices. Details are given in Appendix \ref{sec:AppendixNumericsOneDim}. In Figure \ref{fig:LVSurface}, we plot the calibrated local volatility surface (the function $\sigma(t,S)$). In order to check the quality of our calibration, we computed the pricing error for the quoted option prices. We obtained a mean relative absolute error of $0.7404808\%$, which is quite small.

\begin{figure}[h!]
\centering
\includegraphics[scale=0.40]{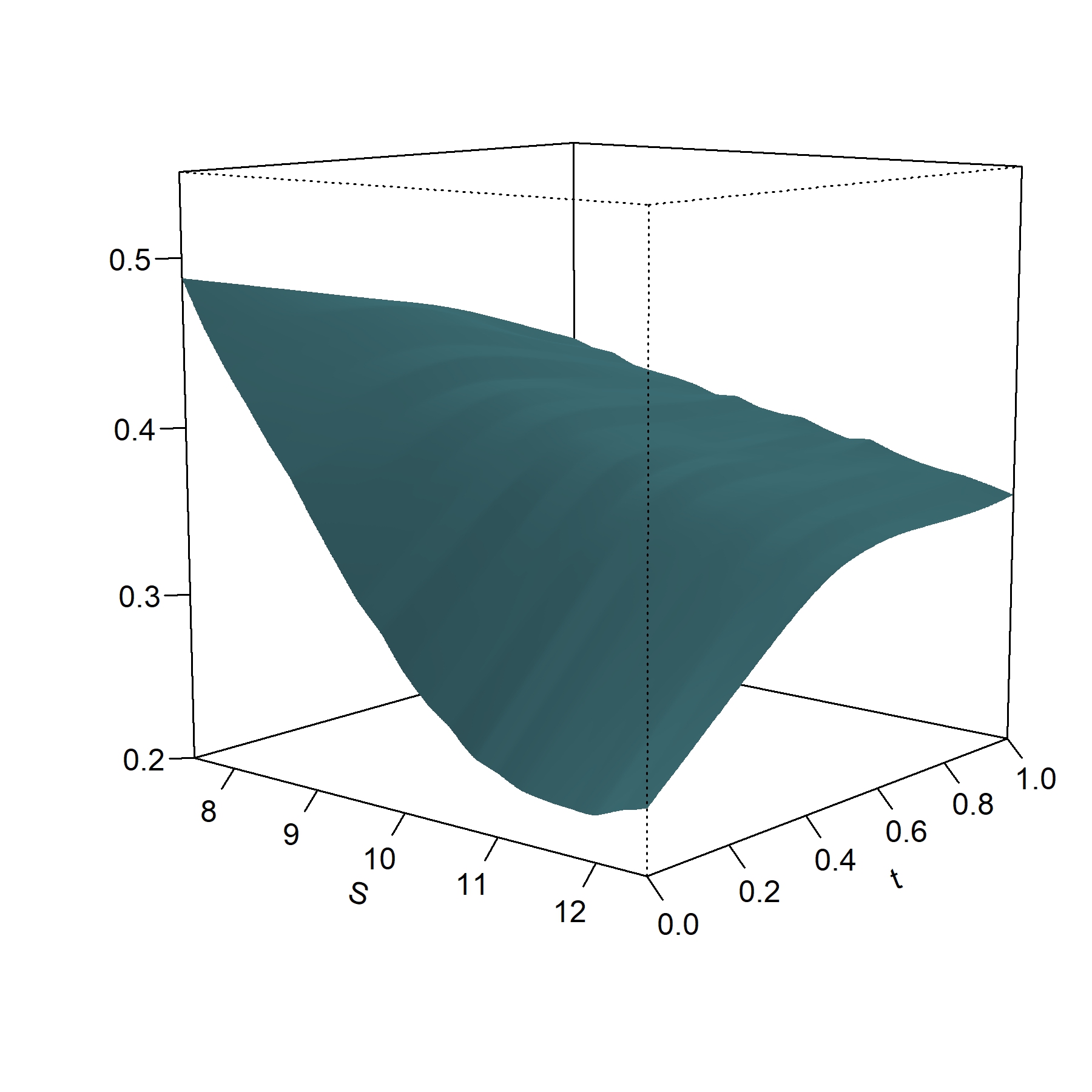}
\caption{Calibrated local volatility surface for the Dupire model.}
\label{fig:LVSurface}
\end{figure}

\subsection{Results}
\label{sec:BasecaseResults}

We generated $1$ mio. paths under the Heston model using a Milstein-discretization scheme with $300$ time-steps. The exercise boundary for the Heston, Black--Scholes and Dupire model was computed numerically via the Brennan-Schwartz algorithm; details are given in Appendix \ref{sec:AppendixNumericsOneDim} and \ref{sec:AppendixNumericsHeston}. We applied the ensuing exercise rules to the $1$ mio. Heston paths, which leads to a very precise approximation of the distributions of the discounted payoff generated by the different exercise rules. Note that for readability, in the following, the term \emph{payoff distribution} is used instead of the longer \emph{distribution of the discounted payoff}.

\subsubsection{Payoff distribution} We begin by comparing the payoff distribution generated by the three different models. Summary statistics are given in Table~\ref{tab:SummaryStats}. 

\begin{table}[h!]
\centering
    \begin{tabular}{l|cccccc}
                    & Heston & Black--Scholes  & Dupire \\
                    \hline
        Median      & 0.000   & 0.000  & 0.000 \\
        Mean (SE)   & 1.076 (0.0013)   & 1.061 (0.0012) & 1.064 (0.0013) \\
        3. Quartile & 2.334   & 2.378  & 2.383 \\
        Maximum     & 5.405   & 3.831  & 4.252 \\ \hline
    \end{tabular}
    \caption{Summary statistics of the payoff distributions of the sophisticated bank (Heston) and the unsophisticated bank (Black--Scholes/Dupire).}
    \label{tab:SummaryStats}
\end{table}
Note that the strategy based on the correctly specified Heston model yields the highest expected discounted payoff---as it should, as the optimal Heston strategy is designed to maximize the expected discounted payoff over all stopping times $\tau \in \mathcal{T}(0)$. The numerical results show that, for the given parameters, the expected payoff under the Dupire local volatility model exceeds that obtained under the Black--Scholes model. Note further that for all three strategies the median is zero, that is for more than 50\% of the simulated paths the option is not exercised at all. 

Figure~\ref{fig:QQ_FD} contains Quantile-Quantile (QQ) plots of the payoff distribution of the Heston strategy against the distribution generated by the Black--Scholes and Dupire models. We see that the medium quantiles under the Heston strategy are smaller than those of the Black--Scholes strategy, whereas the high quantiles of the Heston strategy are substantially larger than those of the Black--Scholes strategy. In particular, the payoff distribution under the Heston model does \emph{not} first-order stochastically dominate the distribution under the Black--Scholes models. The payoff distribution under the Dupire strategy is closer to the distribution of the Heston strategy. The quantiles of the Dupire strategy are mostly smaller than those of the Heston model, but, again, there is no stochastic dominance. 

\begin{figure}[h!]
\centering
\includegraphics[scale=0.50]{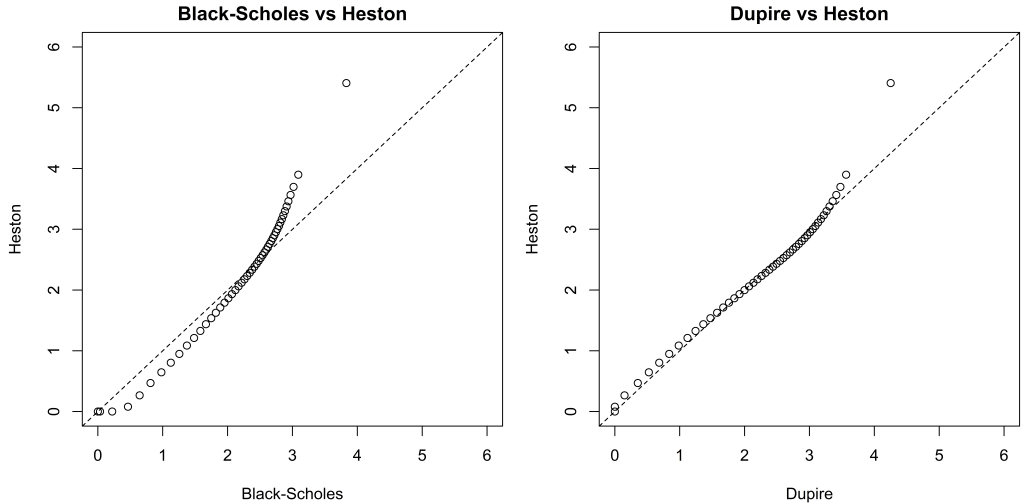}
\caption{Quantile-Quantile (QQ) plots of the payoffs of the sophisticated bank (Heston) against the unsophisticated bank (left: Black--Scholes, right: Dupire).}
\label{fig:QQ_FD}
\end{figure}

\subsubsection{Stopping times} We next analyze the stopping times.  In Figure \ref{fig:QQStopping_FD} we display QQ plots of the stopping time distribution of the Black--Scholes and Dupire strategy against the Heston strategy.  It seems that the stopping times of the Heston model stochastically dominate those of the Black--Scholes model in a first-order sense, meaning that the Heston model tends to exercise \emph{later} than the Black--Scholes model.  
On the other hand, the stopping times of the Dupire model stochastically dominate those of the Heston model in a first-order sense, i.e. the Heston model tends to exercise \textit{earlier} than the Dupire model. Further, we notice that there are more cases where the Heston model exercises compared to the Dupire model.

\begin{figure}[h!]
\centering
\includegraphics[scale=0.50]{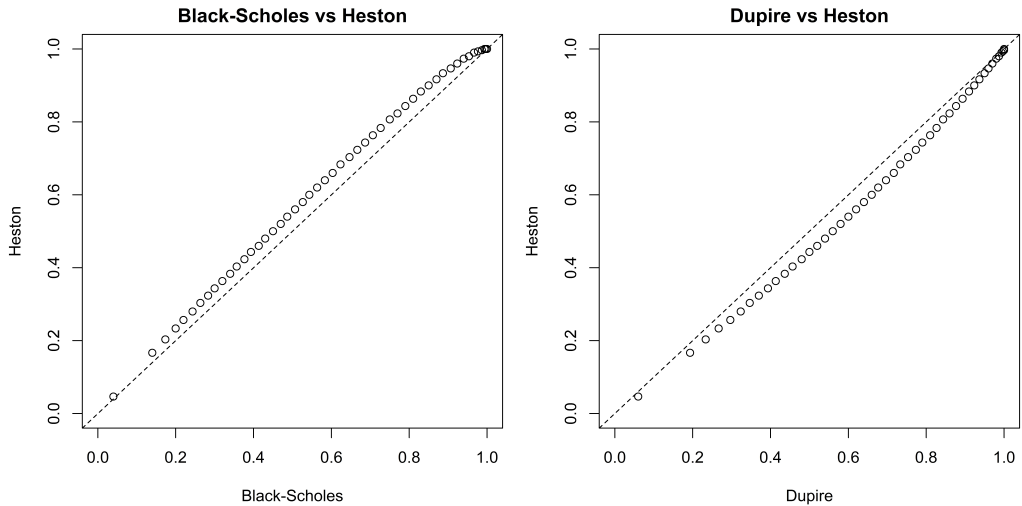}
\caption{QQ plots of stopping times of the sophisticated bank (Heston) against the unsophisticated bank (left: Black--Scholes, right: Dupire).}
\label{fig:QQStopping_FD}
\end{figure}

\subsubsection{Scatterplots} Figure \ref{fig:Scatterplot_FD} presents scatterplots of the first 10{,}000 simulated payoffs under the different exercise strategies. First, we observe that there is no clear ordering of the payoffs: there are paths where the Black--Scholes or Dupire strategy achieves a higher payoff than the Heston strategy and vice versa. In fact, there are even trajectories where one strategy exercises, leading to a positive payoff, whereas the other strategy does not exercise at all. Note that the intuition that the true model should dominate pathwise is misleading. In fact, even a random exercise strategy may outperform the optimal strategy on individual price trajectories. In optimal stopping, optimality is defined in terms of conditional expectations and not by pathwise comparison, and our results confirm that the Heston-based strategy is indeed optimal in the sense of conditional expectation.

\begin{figure}[h!]
\centering
\includegraphics[scale=0.50]{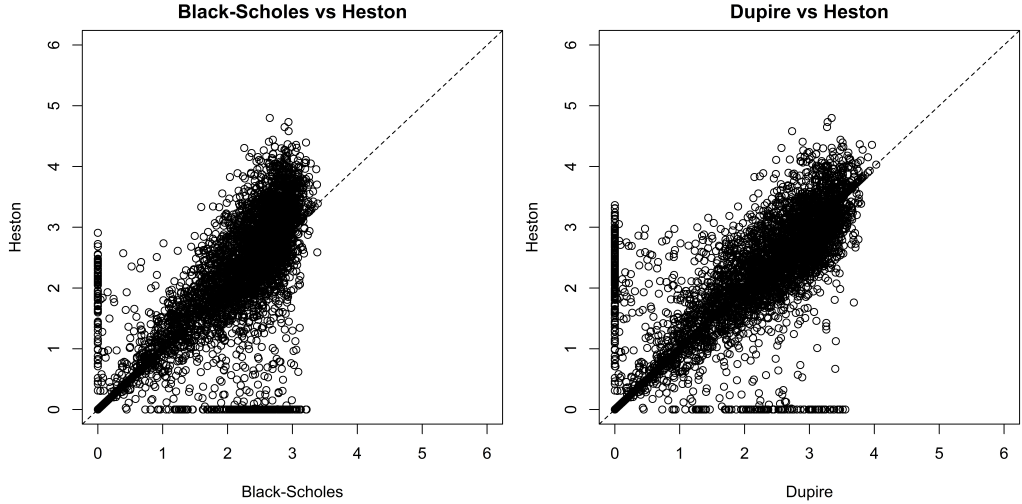}
\caption{Scatterplot of the first 10,000 payoffs of the sophisticated bank (Heston) against the unsophisticated bank (left: Black--Scholes, right: Dupire).}
\label{fig:Scatterplot_FD}
\end{figure}

\subsection{Exercise boundaries}
\label{sec:ApproximationBoundaries} 

To explain the differences among the various strategies, we examine the exercise boundaries associated with the different models. Figure \ref{fig:Boundaries} illustrates the shape of these boundaries. We observe that, in the Heston model, the exercise boundary is decreasing in the instantaneous variance level $v$. This behavior arises because the continuation value of the American put option increases with $v$, and has been proven in \citet[Proposition 4.4, p. 9]{Lamberton2019}. 
Furthermore, for large values of $v$, the exercise boundary $t \mapsto B(t,v)$ in the Heston model lies significantly below the boundary $\tilde B$ obtained under the Black--Scholes and Dupire models. This feature favors later exercise and implies that the maximal payoff under the optimal exercise strategies of the misspecified models is substantially lower than that under the Heston strategy.

\begin{figure}[h!]
\centering
\includegraphics[scale=0.5]{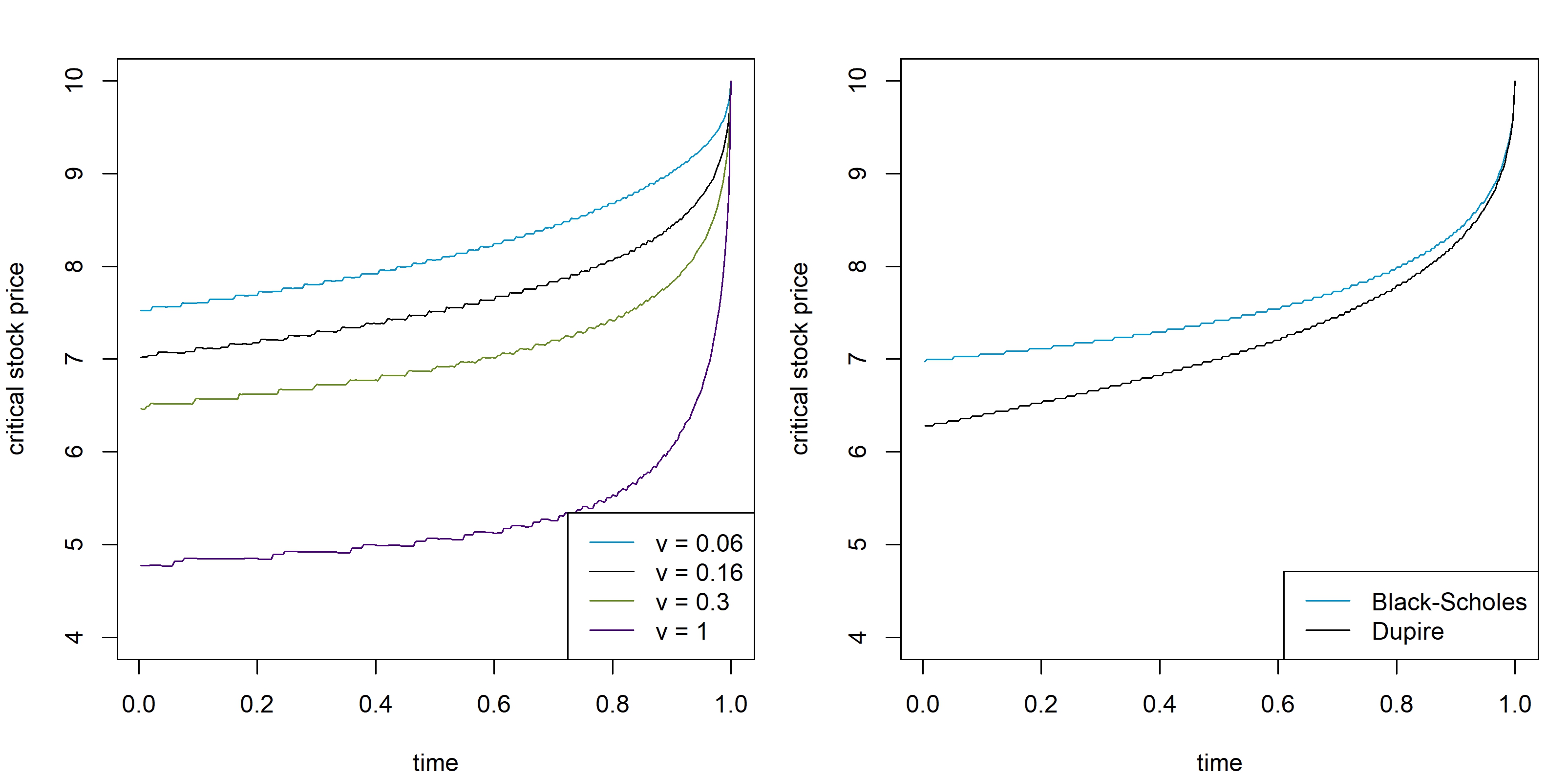}
\caption{Optimal exercise boundaries in the base case. Left: Boundary for selected volatility values under the Heston model. Right: Boundaries under the Black--Scholes and Dupire models.}
\label{fig:Boundaries}
\end{figure}

\section{Extensions}
\label{sec:NumericalResultsExt}

In this section, we discuss numerical results for two extensions of the baseline case. In Section~\ref{sec:Correlation}, we vary the correlation parameter $\rho$ between asset returns and volatility innovations, and analyze the effect on the exercise boundaries and the performance of the different strategies. In Section~\ref{sec:RecalibrationBase}, we examine the impact of periodically recalibrating the Black--Scholes model to option prices generated by the Heston model.

\subsection{Varying Correlation}
\label{sec:Correlation}

In the following, we vary the correlation parameter $\rho \in \{-0.5, 0, 0.5\}$, where $\rho =- 0.5$ corresponds to the base case. The Black--Scholes calibrated volatility is largely insensitive to this modification, with $\sigma$ remaining in the range $0.36$--$0.37$; see Table~\ref{tab:CalibratedBS_Corr}.

\begin{table}[h!]
\centering
    \begin{tabular}{lcccc}
        $\rho$           & $-0.5$    & $0$       & $0.5$      \\
                    \hline
        $\sigma$      &  0.371 & 0.370  & 0.366   \\
    \end{tabular}
    \caption{Calibrated volatility under the Black--Scholes model when varying correlation.}
\label{tab:CalibratedBS_Corr}
\end{table}

The local volatility surfaces for the Dupire model are shown in Figure \ref{fig:LVSurface_Corr}.
Note that the calibrated local volatility surfaces are sensitive to correlation, as the value of $\rho$ affects the form of the implied volatility surface generated by the Heston model: under negative correlation, local volatility is decreasing in $S$, ensuring that  the Dupire model generates a volatility skew; for positive correlation local volatility is increasing in $S$ so that the Dupire model generates a reverse skew; for $\rho=0$ the surface is comparatively flat and the Dupire model generates a volatility smile. 

\begin{figure}[htbp]
    \centering
    \begin{minipage}[b]{0.3\textwidth}
        \centering
        \includegraphics[width=\textwidth]{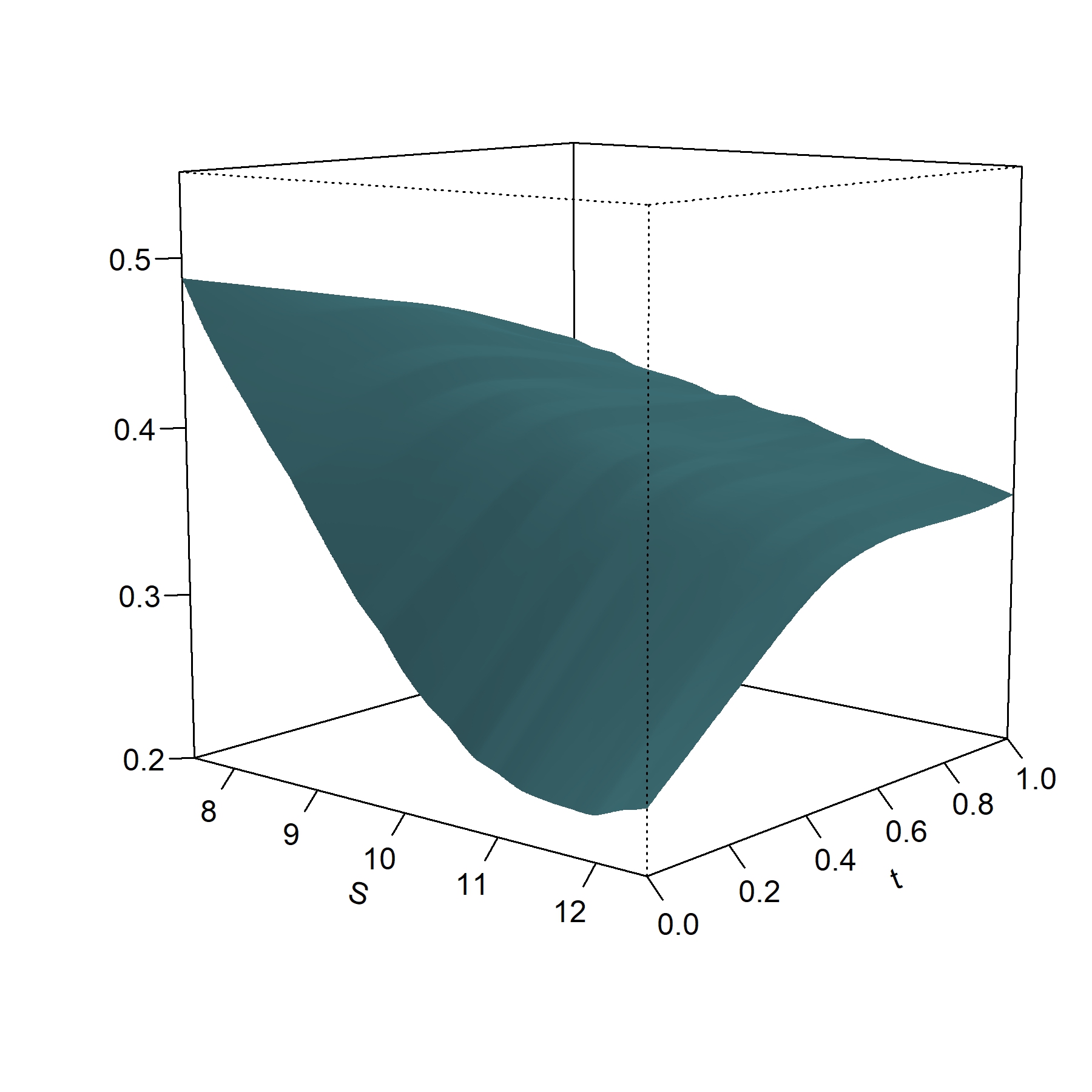}
    \end{minipage}
    \hfill
    \begin{minipage}[b]{0.3\textwidth}
        \centering
        \includegraphics[width=\textwidth]{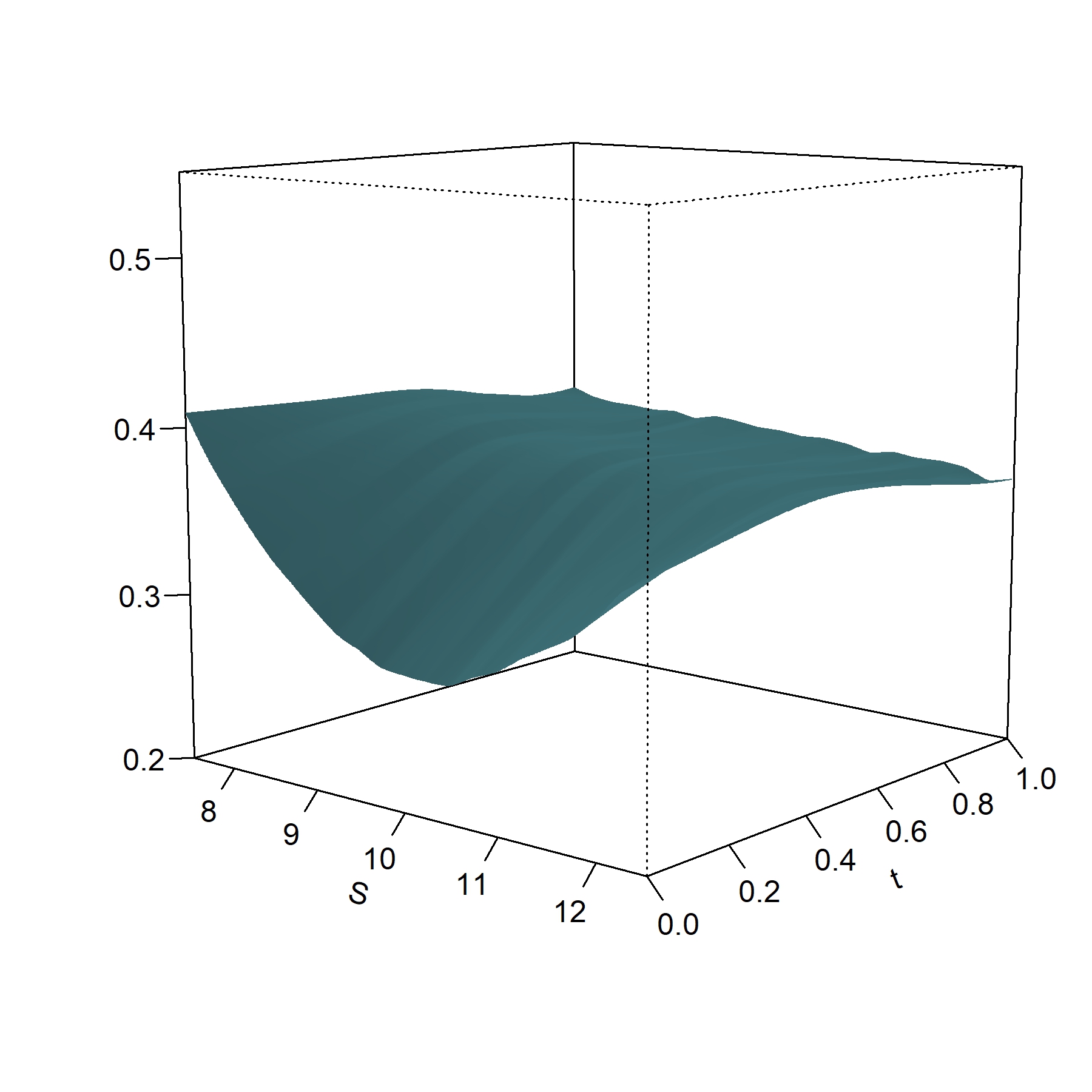}
    \end{minipage}
    \hfill
    \begin{minipage}[b]{0.3\textwidth}
        \centering
        \includegraphics[width=\textwidth]{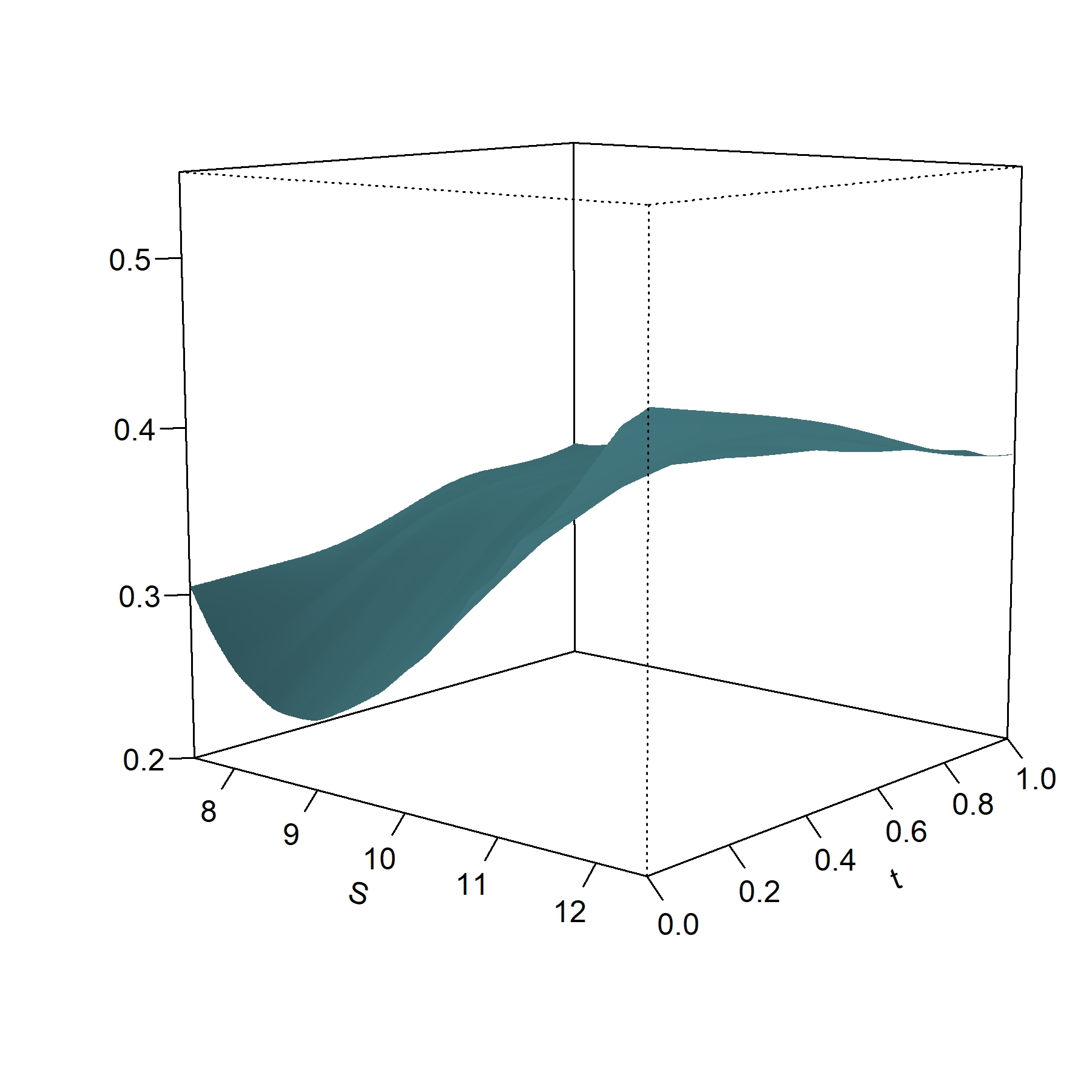}
    \end{minipage}
    \hfill
    \caption{Calibrated volatility surface under the Dupire model when varying correlation $\rho \in \{-0.5, 0, 0.5\}$ going from left to right.}
    \label{fig:LVSurface_Corr}
\end{figure}

\subsubsection{Correlation and exercise boundaries} Consider now the impact of changes in $\rho$ on the optimal exercise boundaries. The boundaries for the one-dimensional models are depicted in Figure \ref{fig:OneDimBoundaries_Corr}. We notice that the exercise boundaries for the Black--Scholes model vary very little with changes in $\rho$, in line with the small change in implied volatility. In the Dupire model, the exercise boundary is \emph{increasing} in $\rho$, that is, ceteris paribus, a higher correlation implies earlier exercise. This can be explained by the form of the local volatility surface. For positive $\rho$, the local volatility surface attains relatively low values when the American put is in the money, i.e., $S < K$. This leads to a low continuation value of the option and hence to a comparatively high exercise boundary. 
When $\rho$ is negative, on the other hand, the local volatility surface attains relatively high values for $S < K$, inducing a higher continuation value and a lower exercise boundary.

Interestingly, in the Heston model the effect of $\rho$ on the exercise boundary is reversed. As shown in~Figure \ref{fig:HestonBoundaries_Corr}, for fixed variance level $v$ the exercise boundary is \emph{decreasing} in $\rho$. To understand this observation, note that the continuation value has similar qualitative properties as the price of an in-the-money (ITM) European put option, even if the two are not identical. It has been shown for the Heston model that the value of ITM put options is increasing in $\rho$; see, for example, \citet[Theorem 4.1, p.~8]{Ould2013}. This suggests that the continuation value is increasing in $\rho$, which implies that the exercise boundary is, in return, decreasing in the correlation parameter.\footnote{Further, it has been shown that the price of out-the-money (OTM) put options is decreasing in $\rho$, but since an American put is never exercised when $S>K$, the exercise boundary is related to put options that are ITM.} 
Figure~\ref{fig:HestonBoundaries_Corr} moreover shows that for fixed $\rho$ the boundary is decreasing in $v$. This effect was discussed already in Section~\ref{sec:ApproximationBoundaries}.

Summarizing, the above observations show that the Heston model with two state variables is able to separate the effects of the volatility level from those of the correlation between stock returns and changes in volatility. In contrast, within the Dupire framework, both effects must be captured implicitly through the local volatility surface. This has the interesting consequence that calibrating the local volatility model to a full (European) option price surface generated by the Heston model fails to correctly transmit the impact of the correlation between asset returns and changes in volatility on optimal exercise strategies. In other words, calibration to the entire option price surface does not mitigate this particular source of model risk.

\begin{figure}[h!]
\centering
\includegraphics[scale=0.5]{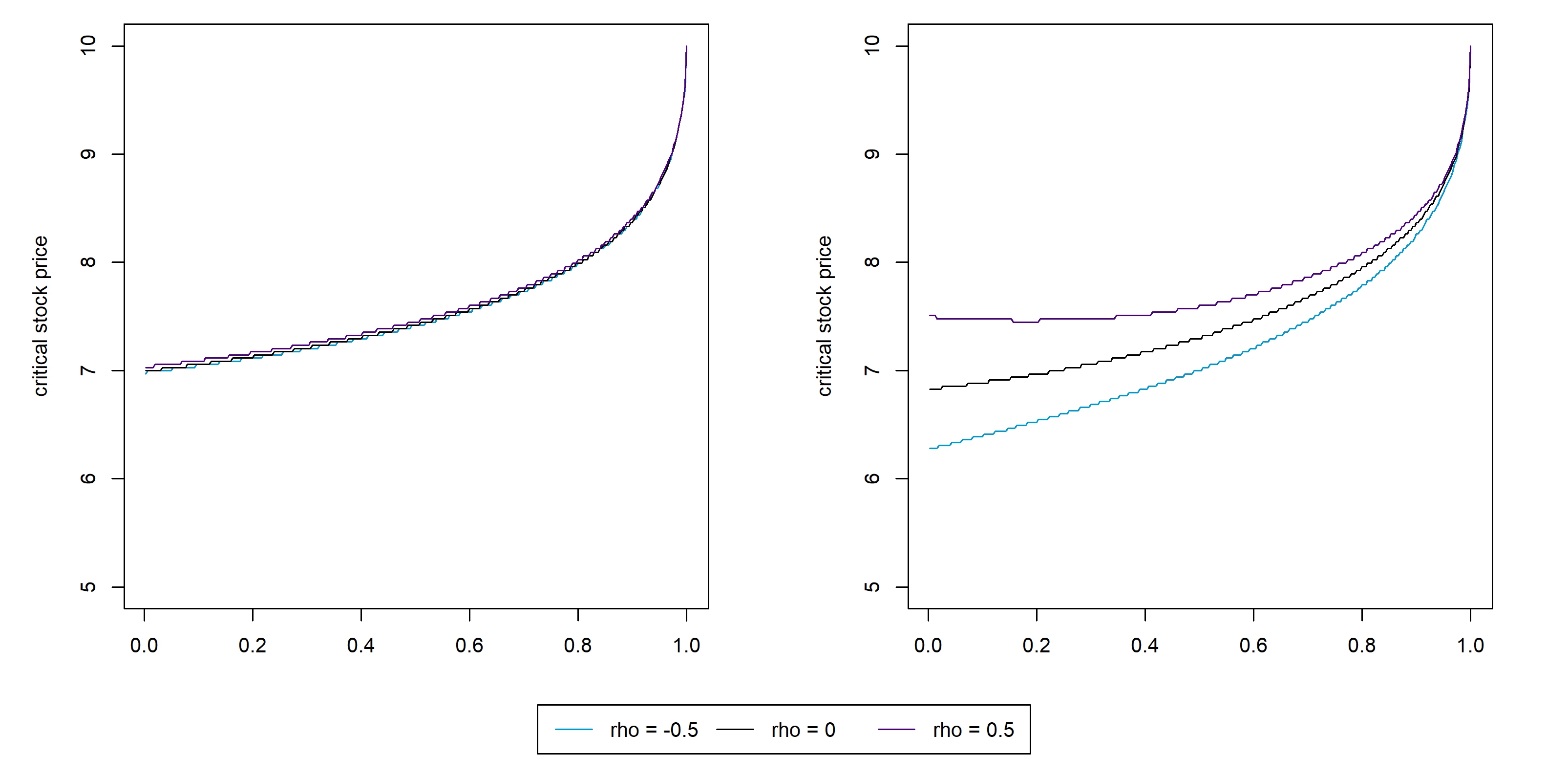}
\caption{Approximate optimal exercise boundaries when varying correlation. Left: approximate boundary under the Black--Scholes model. Right: approximate boundaries under the Dupire model.}
\label{fig:OneDimBoundaries_Corr}
\end{figure}

\begin{figure}[h!]
\centering
\includegraphics[scale=0.75]{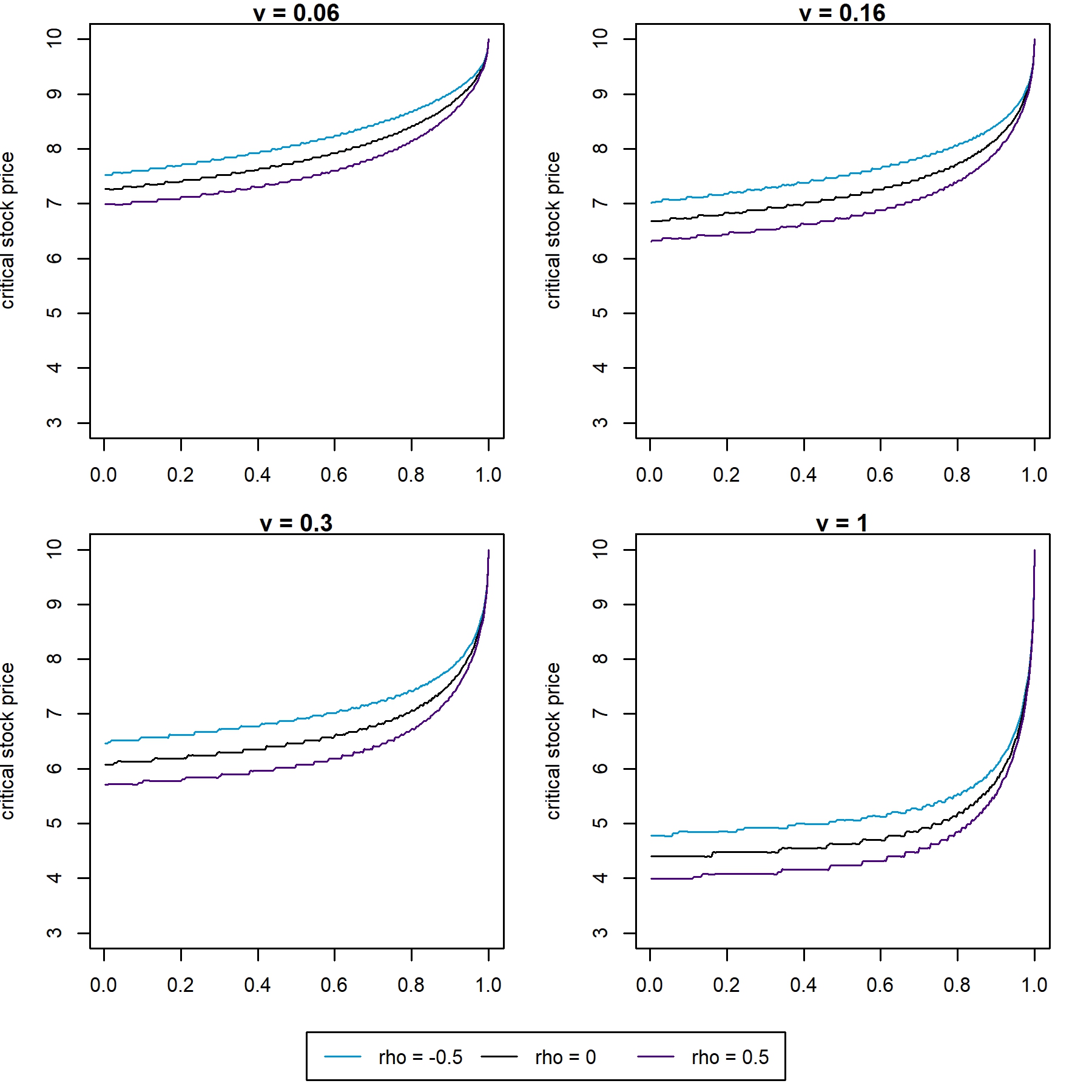}
\caption{Approximate optimal exercise boundaries for selected volatility values under the Heston model when varying correlation.}
\label{fig:HestonBoundaries_Corr}
\end{figure}

\subsubsection{Payoff distribution and stopping times} We report the expected discounted payoffs for the three models under the correlation values $\rho \in \{-0.5, 0, 0.5\}$ in Table~\ref{tab:ExpectedDiscPayoff_Corr}. Consistent with theoretical predictions, the Heston-based strategy yields the highest expected payoff for all values of $\rho$. Notably, for $\rho \leq 0$, the Dupire strategy produces a higher average payoff than the Black--Scholes strategy, whereas for $\rho = 0.5$, the Black--Scholes strategy slightly outperforms the Dupire approach.\footnote{One could speculate that this is a reflection of the issue discussed in the previous paragraph, i.e., the failure of the calibrated local volatility model to correctly capture the impact of the Heston $\rho$ on the optimal exercise strategy.}

Next we discuss the impact of $\rho$ on qualitative properties of the payoff distribution. We give the median discounted payoffs in Table \ref{tab:MedianDiscPayoff_Corr}. 
As $\rho$ increases, the median increases substantially and the difference between the mean and the median of the payoff distribution becomes much smaller for all strategies, indicating that payoff distributions become less skewed as correlation increases. Moreover, the fact that for $\rho=0.5$ the median is positive implies that in this case the option is exercised in more than 50\% of the cases under all three strategies. 

\begin{table}[h!]
\centering
    \begin{tabular}{l|cccccc}
        $\rho$  & Heston   & Black--Scholes & Dupire \\ \hline
        -0.5    & 1.076    & 1.061         & 1.064  \\
        0       & 1.061    & 1.044         & 1.046  \\
        0.5     & 1.035    & 1.023         & 1.020  \\ \hline
    \end{tabular}
    \caption{Expected discounted payoff of the sophisticated bank (Heston) and the unsophisticated bank (Black--Scholes/Dupire) when varying correlation.}
    \label{tab:ExpectedDiscPayoff_Corr}
\end{table}

\begin{table}[h!]
\centering
    \begin{tabular}{l|cccccc}
        $\rho$  & Heston   & Black--Scholes & Dupire \\ \hline
        -0.5    & 0.000    & 0.000         & 0.000  \\
        0       & 0.071    & 0.226         & 0.127  \\
        0.5     & 0.523    & 0.569         & 0.689  \\\hline
    \end{tabular}
    \caption{Median of the discounted payoff of the sophisticated bank (Heston) and the unsophisticated bank (Black--Scholes/Dupire) when varying correlation.}
    \label{tab:MedianDiscPayoff_Corr}
\end{table}

The impact of changes in the correlation parameter $\rho$ on both the optimal exercise time and the payoff distribution is driven by two interacting mechanisms. First, $\rho$ affects the location of the optimal exercise boundary. Second, it modifies the joint distribution of the underlying asset price paths in the Heston:
for $\rho < 0$, the return distribution exhibits a heavier lower tail, implying a higher probability of paths along which the stock price attains very low levels. Since the discounted stock price process is a martingale, this increased downside risk is offset by a larger mass of paths that remain close to or slightly above the initial value $S(0)=K$. Conversely, for $\rho > 0$, this asymmetry is reversed, and a larger proportion of paths evolves slightly below the strike level $K$. 
The observed impact of changes in correlation thus reflect the combined effect of boundary shifts and changes in the stochastic dynamics of the underlying variables.
\begin{itemize}
    \item In the Black--Scholes model, the optimal exercise boundary is essentially insensitive to changes in the correlation parameter. Hence, differences in the exercise strategy are driven almost entirely by changes in the distribution of the underlying Heston price paths. For small values of $t$, fewer trajectories cross the exercise boundary when $\rho > 0$ than when $\rho < 0$. In contrast, for $t$ close to $T$, exercise occurs more frequently for $\rho > 0$. As $t \to T$, the exercise boundary converges to $K$, so that even a small downward movement in the asset price---which is more likely under $\rho > 0$ than under $\rho < 0$---is sufficient to trigger exercise. This shift in exercise behavior is illustrated in Figure \ref{fig:QQ_StoppingTimesInternal_Corr}. Since the exercise boundary is increasing over time in the Black--Scholes model, later exercise moreover results in lower payoffs and a less skewed payoff distribution.
    \item In the Dupire model, the exercise boundary corresponding to $\rho = 0.5$ lies above that for $\rho = -0.5$, in particular for small $t$. Ceteris paribus, this favors early exercise.  Combined with the increased frequency of exercise for $t$ close to $T$ under positive correlation, this implies that exercise is more likely  when $\rho > 0$. Because a higher exercise boundary mechanically reduces the maximal attainable payoff, for $\rho>0$ realized payoffs are typically lower than in the base case.
    \item In the Heston model, as the correlation parameter changes from negative to positive, the exercise boundary shifts downward and the distribution of asset prices is altered. For small $t$, these effects result in less frequent exercise when $\rho > 0$. Near maturity, however, exercise becomes more frequent under positive correlation due to the behavior of the Heston price paths.  This pattern is confirmed in Figure \ref{fig:QQ_StoppingTimesInternal_Corr}: the lower (upper) quantiles of the stopping time distribution are larger (smaller) for $\rho > 0$ than for $\rho < 0$.
\end{itemize}
Overall, these observations explain why, for positive correlation, a larger fraction of paths leads to early exercise, while the corresponding payoffs tend to be lower on average, producing a less skewed payoff distribution. 

\begin{figure}[h!]
\centering
\includegraphics[scale=0.6]{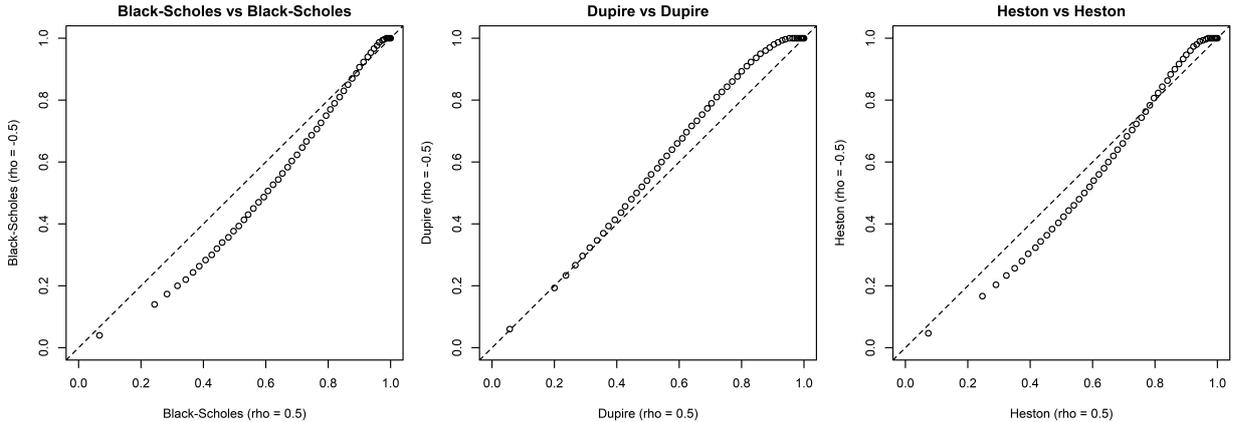}
\caption{QQ plots of the stopping times obtained under negative correlation against those obtained under positive correlation, within each model.}
\label{fig:QQ_StoppingTimesInternal_Corr}
\end{figure}

\begin{figure}
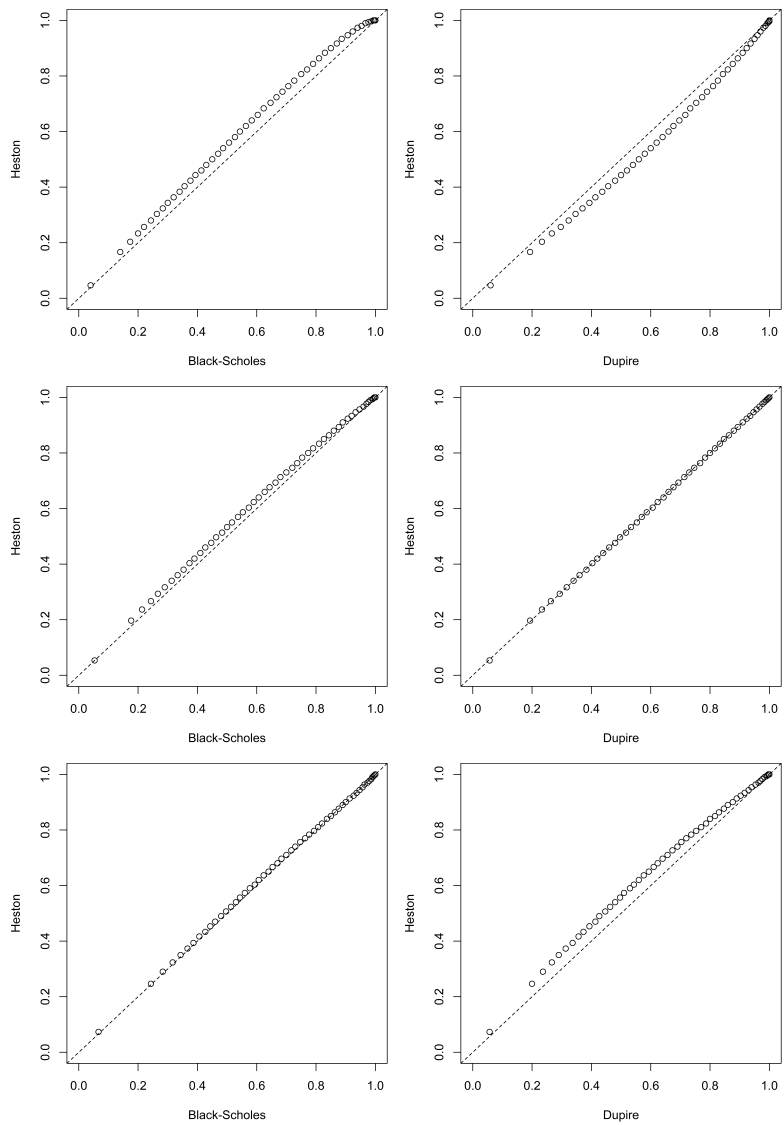

    \centering
    \includegraphics[height=0.2\textheight,trim=0cm 0cm 0cm 1cm, clip]{Correlation_Figures/QQ_StoppingTimes-05.jpeg}\par\medskip
    \includegraphics[height=0.2\textheight,trim=0cm 0cm 0cm 1cm, clip]{Correlation_Figures/QQ_StoppingTimes0.jpeg}\par\medskip
    \includegraphics[height=0.2\textheight,trim=0cm 0cm 0cm 1cm, clip]{Correlation_Figures/QQ_StoppingTimes05.jpeg}\par\medskip
    \caption{QQ plots of the stopping times of the sophisticated bank (Heston) against the unsophisticated bank (left: Black--Scholes, right: Dupire) when varying correlation $\rho \in \{-0.5, 0, 0.5\}$ (top to bottom).}
    \label{fig:QQStopping_FD_Corr}
\end{figure}

\subsection{Black--Scholes Model Recalibration at a 7-Day Frequency}
\label{sec:RecalibrationBase}

Model recalibration is standard practice in derivatives markets. In this section, we therefore analyze the performance of the Black--Scholes strategy under the assumption that the model is recalibrated at a $7$-day frequency to option prices generated by the true Heston model.
At each calibration date, the Black–Scholes model is fitted to the price of a European put with the same strike as the American put under consideration. Since the objective of the unsophisticated bank in this study is solely to determine the optimal exercise policy for a given American put, this calibration approach is internally consistent. In practice, however, financial institutions typically manage portfolios containing multiple options written on the same underlying asset. Calibration is therefore commonly based on options that are ATM at the calibration date.

The computational cost of this simulation experiment is substantial. We therefore restrict the analysis to $100,000$ simulated paths from the Heston model. As roughly $50$ recalibrations are required per path, this results in a total of roughly $5$ million calibrations and applications of Brennan-Schwartz. 

\subsubsection{Results}
\label{se:RecalibrationResults}

We now analyze the performance of the exercise strategy derived from the recalibrated Black--Scholes model. Summary statistics for the Heston strategy, as well as for the Black--Scholes strategy with and without recalibration, are reported in Table~\ref{tab:SummaryStatsRecalibrated}. The expected payoff generated by the recalibrated Black--Scholes strategy is significantly lower than that obtained under the Heston strategy. Moreover, it is slightly lower than the expected payoff of the Black--Scholes strategy without recalibration.\footnote{This may be due to simulation noise.} These results indicate that repeated recalibration does not, on average, lead to improved performance. 
This finding is noteworthy, as it provides a complementary perspective to the results in \cite{Cont-2025}, who argues that continuously recalibrating an auxiliary pricing model is an effective recipe for the dynamic hedging of European options under model uncertainty. Our results indicate that such benefits do not necessarily carry over to the context of optimal exercise for American options.

\begin{table}[h]
\centering
           \begin{tabular}{lccccc}
                   & & Heston          && \multicolumn{2}{c}{Black--Scholes}\\ 
                    &&                 && with recalibration& without recalibration \\
                    \hline
        Median      && 0.000           && 0.132           &  0.000  \\
        Mean (SE)   && 1.078 (0.0041)  && 1.057 (0.0037)  &  1.063 (0.0038)  \\
        3. Quartile && 2.337           && 2.293           &  2.377 \\
        Maximum     && 5.405           && 3.862           &  3.613\\  \hline
    \end{tabular}
    \vspace{0.1cm}
    \caption{Summary statistics of the discounted payoff distributions of the sophisticated bank (Heston) and the unsophisticated bank (Black--Scholes with and without recalibration).}
    \label{tab:SummaryStatsRecalibrated}
\end{table}

A more detailed analysis shows that the recalibrated Black--Scholes model exercises marginally more frequently than the single-calibration specification. In particular, intermediate and late exercise times are shifted slightly forward. This behavior can be explained by the properties of the implied volatility distribution at the recalibration dates.
As illustrated in Figure~\ref{fig:RecalibrationDensitiesIVs}, the distribution of implied volatilities remains centered around the initial implied volatility. As maturity approaches, however, the mean of the distribution falls below the initial implied volatility. In this case, the exercise boundary shifts upward, thereby favoring earlier exercise. This mechanism accounts for the reduction in skewness relative to the single-calibration model. Nevertheless, these shifts do not generate systematic performance improvements. At higher payoff quantiles, the single-calibration model often attains slightly higher outcomes, as shown in Figure~\ref{fig:Scatterplot_RecalBSvsSingleCal}. The persistent proximity of the implied volatility distribution to its initial level provides a plausible explanation for the absence of aggregate performance gains under the adopted recalibration procedure. Although recalibration modifies exercise decisions, it does not materially affect average results and therefore does not compensate for model misspecification.
We attribute this finding primarily to systematic discrepancies between the instantaneous volatility level $\sqrt{v(t_n)}$ at calibration dates $t_n = 1, \ldots, 43$ and the corresponding implied volatility $\sigma_n$. When the variance process $v$ is high, $\sigma_n$ tends to underestimate $\sqrt{v(t_n)}$, whereas in low-volatility regimes it typically overestimates it. This pattern is consistent with the interpretation of Black--Scholes implied volatility as a reflection of expectations of future volatility rather than contemporaneous spot volatility. When realized volatility reaches extreme levels, implied volatilities provide a stabilized estimate, particularly for long times to maturity. 


\begin{figure}[h!]
\centering
\includegraphics[scale=0.5]{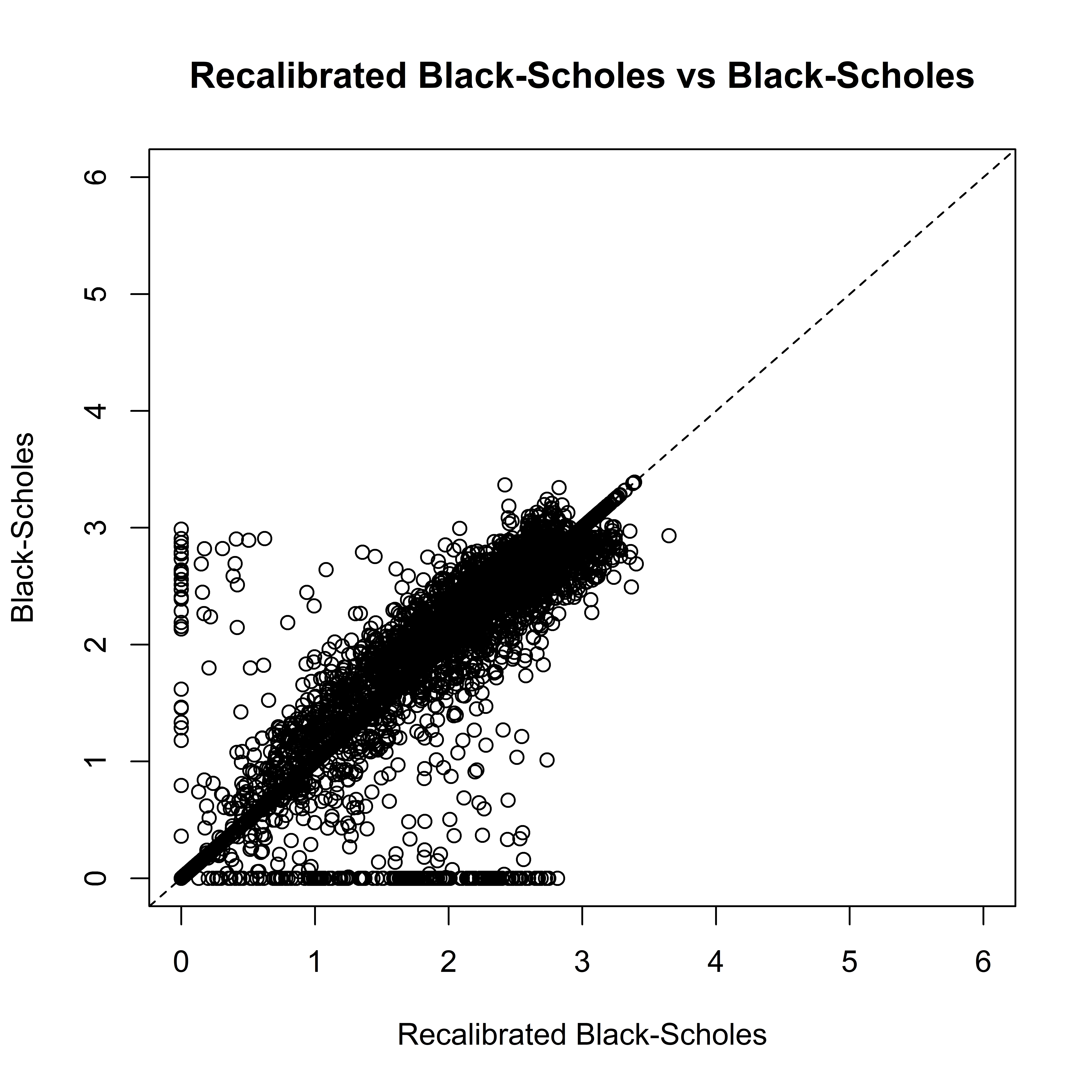}
\caption{Scatterplots of the first 10,000 payoffs of the Black-Scholes model based on initial calibration against those of the recalibrated Black-Scholes model.}
\label{fig:Scatterplot_RecalBSvsSingleCal}
\end{figure}

In summary, the results indicate that attempting to track \textit{current volatility} using implied volatilities---that is, market expectations of future volatility---does not improve performance. In particular, in our context recalibration seems to be  insufficient to offset the effects of model misspecification.   This finding is practically relevant, as the recalibration scheme closely mirrors standard market practice. 
Nevertheless, alternative recalibration methodologies may produce different results and therefore represent a promising direction for future research.

\begin{figure}[h!]
\centering
\includegraphics[scale=0.6]{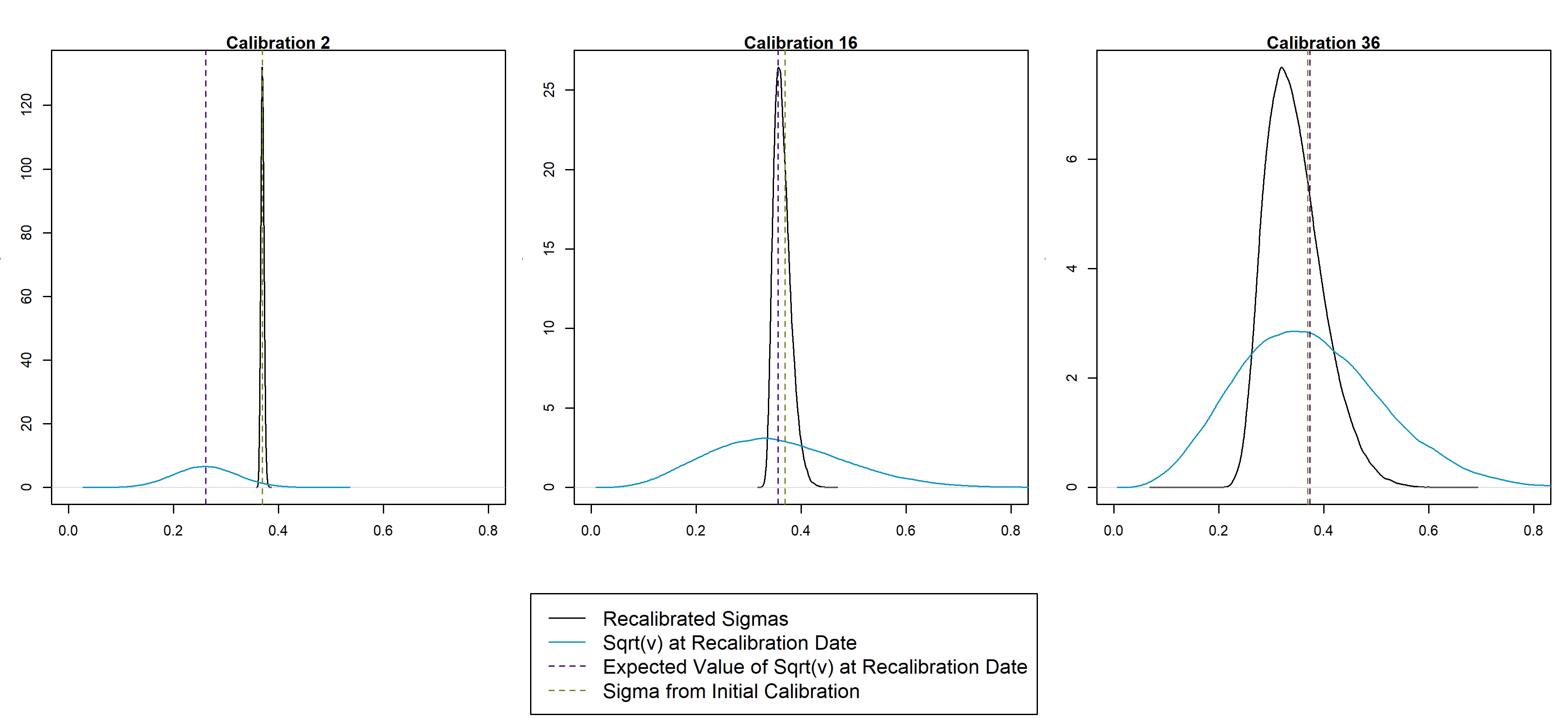}
\caption{Densities of Recalibrated Black-Scholes Implied Volatilities $\sigma_n$ and $\sqrt{v(t_n)}$ at Calibration Dates $t_n \in \{0.02, 0.347, 0.813\}$.}
\label{fig:RecalibrationDensitiesIVs}
\end{figure}

\section{Conclusion}
\label{sec:Conclusion} 
In this paper, we have explored the impact of model risk on optimal exercise strategies of American options. To this end, we considered the performance of exercise strategies derived from a ``wrong'' model compared to the truly optimal strategy matching the (hypothetical) ``true'' data generating process. Specifically, we used strategies based on the Black--Scholes and Dupire models within a benchmark framework that treats the Heston model as the data-generating process. In this framework, we considered the impact of stochastic volatility, specifically the level of volatility and the correlation between asset returns and changes in volatility, on the exercise strategies based on the misspecified and ``true'' models.

We first study a baseline setting with negative correlation between asset returns and changes in volatility. Unsurprisingly, the Heston-based strategy is optimal with respect to its own objective criterion. Nevertheless, we document numerous scenarios in which misspecified one-dimensional models outperform the benchmark ex post, purely due to sampling variation. Importantly, we do not observe stronger notions of optimality: in particular, there is no evidence of first-order stochastic dominance of the Heston-based strategy over either the Black--Scholes or the Dupire strategy. In short, strategies that are optimal in expectation are on some paths outperformed by suboptimal strategies.

Varying the correlation parameter of the benchmark Heston model provides additional insight into the sensitivity of optimal exercise boundaries to correlation. While the exercise boundaries in the one-dimensional models increase with correlation, the Heston boundary decreases as the correlation parameter rises. Since the correlation parameter $\rho$ governs the skewness of equity returns in the Heston framework, we highlight that the impact of $\rho$ on the early exercise decision simultaneously depends on the change in the exercise boundaries and the change in the market. We find that the skewness of spot returns directly impacts the skewness of the resulting discounted payoff distribution. In consequence, calibrating the Dupire model to a full European option price surface generated by the Heston model fails to correctly transmit the impact of the correlation between asset returns and changes in volatility on optimal exercise strategies. In other words, calibration to the entire option price surface does not mitigate this particular source of model risk; nor does calibrating a single implied volatility when Black–Scholes governs the exercise strategy.

An analysis of the impact of dynamic recalibration on the Black–Scholes exercise rule reveals that repeated recalibration yields negligible benefits compared to leaving the initially calibrated implied volatility unchanged. Thus, ongoing recalibration to European option prices also does not sufficiently lock the behavior of the wrong model to the (hypothetical) true dynamics, again failing to mitigate the impact of model risk on exercise strategies. This may be of particular interest to practitioners, as conventional wisdom seems to be that sufficient calibration to liquid option prices can ''lock down'' a misspecified model enough to eliminate most model risk. This may be true in some cases, for example \cite{Cont-2025} argues that continuously recalibrating an auxiliary pricing model is an effective recipe for the dynamic hedging of European options under model uncertainty, but this does not readily apply to optimal exercise strategies for American options. 

Given that these effects already arise for a standard American vanilla put, our findings suggest that model misspecification may have even more pronounced implications for more complex or path-dependent American-style derivatives. Further research is therefore warranted to assess the robustness of exercise policies in richer product classes and market environments.


\section*{Disclosure}

The authors report there are no competing interests to declare.

\section*{Funding}

No funding was received.

\bibliographystyle{abbrvnat}
\setlength{\bibsep}{0pt}
\bibliography{References}

\titleformat{\section}
  {\normalfont\Large\bfseries}
  {Appendix \thesection:}
  {1em}
  {}
  
\appendix

\section{FD Schemes}
\label{sec:AppendixFDSchemes}

In the following, we briefly review different schemes that may be used to approximate derivatives using finite differences. Consider $f: \mathbb{R} \rightarrow \mathbb{R}$ and let $x_0 < x_1 < x_2 \ldots < x_m$ be some given mesh, where we define $\Delta_{x_i}:= x_i - x_{i-1}$. We may then approximate $f'(x_i)$ through 

\begin{equation}
\label{eq:Upwind1FirstDeriv}
f'(x_i) \approx \alpha_{i,-2} f(x_{i-2}) + \alpha_{i,-1} f(x_{i - 1}) + \alpha_{i, 0} f(x_{i}),
\end{equation}
\begin{equation}
\label{eq:CentralFirstDeriv}
f'(x_i) \approx \beta_{i,-1} f(x_{i-1}) + \beta_{i,0} f(x_{i}) + \beta_{i, 1} f(x_{i+1}),
\end{equation}
\begin{equation}
\label{eq:Upwind2FirstDeriv}
f'(x_i) \approx \gamma_{i, 0} f(x_{i}) + \gamma_{i,1} f(x_{i +1 }) + \alpha_{i, 2} f(x_{i + 2}),
\end{equation}

with coefficients given by 

$$
\alpha_{i, - 2} = \frac{\Delta_{x_i}}{\Delta_{x_{i- 1}}(\Delta_{x_{i-1}}+ \Delta_{x_i})}, \qquad \alpha_{i, - 1} = \frac{-\Delta_{x_{i- 1}} - \Delta_{x_{i}}}{\Delta_{x_{i-1}} \Delta_{x_{i}}}, \qquad \alpha_{i, 0} = \frac{\Delta_{x_{i- 1}} + 2\Delta_{x_{i}}}{\Delta_{x_{i}}(\Delta_{x_{i- 1}} + \Delta_{x_{i}})},
$$

$$
\beta_{i, -1} = \frac{-\Delta_{x_{i +1}}}{\Delta_{x_{i}}(\Delta_{x_{i}}+ \Delta_{x_{i + 1}})}, \qquad \beta_{i, 0} = \frac{\Delta_{x_{i+1}} - \Delta_{x_{i}}}{\Delta_{x_{i}} \Delta_{x_{i +1}}}, \qquad \beta_{i, 1} = \frac{\Delta_{x_{i}}}{\Delta_{x_{i+1}}(\Delta_{x_{i}} + \Delta_{x_{i+1}})},
$$

$$
\gamma_{i, 0} = \frac{-2\Delta_{x_{i + 1}} - \Delta_{x_{i + 2}}}{\Delta_{x_{i+1}}(\Delta_{x_{i+1}}+ \Delta_{x_{i+2}})}, \qquad \gamma_{i, 1} = \frac{\Delta_{x_{i+1}} + \Delta_{x_{i+2}}}{\Delta_{x_{i+1}} \Delta_{x_{i+2}}}, \qquad \gamma_{i, 2} = \frac{-\Delta_{x_{i+ 1}}}{\Delta_{x_{i+2}}(\Delta_{x_{i+1}} + \Delta_{x_{i+2}})}.
$$

We refer to the FD scheme given in Equation \eqref{eq:Upwind1FirstDeriv}, \eqref{eq:CentralFirstDeriv} and \eqref{eq:Upwind2FirstDeriv} as the downward, central and upward scheme, respectively. They can easily be derived using Taylor expansion. Given that $f$ is sufficiently often continuously differentiable we have a second-order error. 

To approximate the second derivative $f''(x)$ we may use the following schemes

\begin{equation}
\label{eq:Upwind1SecondDeriv}
f''(x_i) \approx \eta_{i,-2} f(x_{i-2}) + \eta_{i,-1} f(x_{i - 1}) + \eta_{i, 0} f(x_{i}),
\end{equation}
\begin{equation}
\label{eq:CentralSecondDeriv}
f''(x_i) \approx \delta_{i,-1} f(x_{i-1}) + \delta_{i,0} f(x_{i}) + \delta_{i, 1} f(x_{i+1}),
\end{equation}
\begin{equation}
\label{eq:Upwind2SecondDeriv}
f''(x_i) \approx \epsilon_{i, 0} f(x_{i}) + \epsilon_{i,1} f(x_{i +1 }) + \epsilon_{i, 2} f(x_{i + 2}),
\end{equation}

with coefficients given by 

$$
\eta_{i, - 2} = \frac{2}{\Delta_{x_{i- 1}}(\Delta_{x_{i-1}}+ \Delta_{x_i})}, \qquad \eta_{i, - 1} = \frac{-2}{\Delta_{x_{i-1}} \Delta_{x_{i}}}, \qquad \eta_{i, 0} = \frac{2}{\Delta_{x_{i}}(\Delta_{x_{i- 1}} + \Delta_{x_{i}})},
$$

$$
\delta_{i, -1} = \frac{2}{\Delta_{x_{i}}(\Delta_{x_{i}}+ \Delta_{x_{i + 1}})}, \qquad \delta_{i, 0} = \frac{-2}{\Delta_{x_{i}} \Delta_{x_{i +1}}}, \qquad \delta_{i, 1} = \frac{2}{\Delta_{x_{i+1}}(\Delta_{x_{i}} + \Delta_{x_{i+1}})},
$$

$$
\epsilon_{i, 0} = \frac{2}{\Delta_{x_{i+1}}(\Delta_{x_{i+1}}+ \Delta_{x_{i+2}})}, \qquad \epsilon_{i, 1} = \frac{-2}{\Delta_{x_{i+1}} \Delta_{x_{i+2}}}, \qquad \epsilon_{i, 2} = \frac{2}{\Delta_{x_{i+2}}(\Delta_{x_{i+1}} + \Delta_{x_{i+2}})}.
$$

Now assuming $f: \mathbb{R}^2 \rightarrow \mathbb{R}$, one may consider approximating mixed derivatives. For this, we let $x_i, \Delta_{x_i}$ be as above and additionally consider the mesh $y_0 < y_1 < y_2 < \ldots y_n$ with $\Delta_{y_j} := y_j - y_{j - 1}$. We denote by $\hat{\beta}_{i, k}$ the analogous to $\beta_{i, k}$ in Equation \eqref{eq:CentralFirstDeriv} but in the $y$-direction. For the mixed derivative, we may assume that both derivatives are approximated using a central scheme

\begin{equation}
\label{eq:CentralMixed1}
\frac{\partial^2 f}{\partial x\partial y}(x_i, y_j) \approx \sum_{k, l = -1}^1 \beta_{i, j}\hat{\beta}_{j, l} f(x_{i + k}, y_{i + l}),
\end{equation}

or we may mix, such as, for instance

\begin{equation}
\label{eq:CentralMixed2}
\frac{\partial^2 f}{\partial x\partial y}(x_i, y_j) \approx \sum_{l = 0}^2\sum_{k-1}^1 \beta_{i, j}\hat{\gamma}_{j, l} f(x_{i + k}, y_{i + l}),
\end{equation}

where $\hat{\gamma}_{i, k}$ is the analogous to $\gamma_{i, k}$ in Equation \eqref{eq:Upwind2FirstDeriv}.

\section{Details on the Numerics in the One-Dimensional Models}
\label{sec:AppendixNumericsOneDim}

When considering the one-dimensional models, we follow the approach outlined in \citet[p. 12-14]{Andersen1998}, which is adapted to the Dupire model and simplifies in the Black--Scholes model. In a first step, the $(t, \tilde{X})$-plane is discretized into a uniformly spaced mesh with $n_1 + 1$ and $n_2 + 1$ entries, respectively,

$$
t_j = j \Delta_t = j \frac{T}{n_1} \qquad j = 0, \ldots, n_1,
$$

$$
x_i = x_0 + i \Delta_x = x_0 + i \frac{x_{n_2} - x_0}{n_2} \qquad i = 0, \ldots, n_2.
$$

We approximate the derivatives as 

\begin{equation}
\label{eq:DiscretOneDim}
\left.
\begin{aligned}
\frac{\partial \tilde{U}(t_j, x_i)}{\partial t} &\approx \frac{\tilde{U}(t_{j + 1}, x_i) - \tilde{U}(t_j, x_i)}{\Delta_t},\\
\frac{\partial \tilde{U}(t_j, x_i)}{\partial \tilde{X}} &\approx (1 - \lambda_1) \frac{\tilde{U}(t_{j}, x_{i + 1}) - \tilde{U}(t_j, x_{i-1})}{2 \Delta_x} + \lambda_1 \frac{\tilde{U}(t_{j + 1}, x_{i + 1}) - \tilde{U}(t_{j + 1}, x_{i-1})}{2 \Delta_x},\\
\frac{\partial \tilde{U}(t_j, x_i)}{\partial \tilde{X}^2} &\approx (1 - \lambda_1) \frac{\tilde{U}(t_{j}, x_{i + 1}) - 2 \tilde{U}(t_j, x_i) + \tilde{U}(t_j, x_{i-1})}{\Delta_x^2} \\ &\quad +\lambda_1 \frac{\tilde{U}(t_{j + 1}, x_{i + 1}) - 2\tilde{U}(t_{j + 1}, x_i) + \tilde{U}(t_{j + 1}, x_{i-1})}{\Delta_x^2},
\end{aligned}
\right\}
\end{equation}

where the parameter $\lambda_1 \in [0, 1]$ controls the time at which the partial derivatives with respect to the state variable are evaluated. As \citet[p. 13]{Andersen1998} note, for $\lambda_1 = 0$ we have a fully implicit FD scheme, for $\lambda_1 = 1$ we have an explicit scheme, and for $\lambda_1 = 0.5$ we obtain the Crank-Nicolson scheme. \citet[p. 15]{Andersen1998} note that both the implicit and Crank-Nicolson scheme are unconditionally stable as long as the local volatility remains non-negative. 

We point out that a very common technique of solving PDE's is the method-of-lines, in which one first discretizes space, leading to a system of Ordinary Differential Equations (ODEs), and then in a second step discretizes time, where the Crank-Nicolson scheme is one possible choice. In doing so, derivatives with respect to $\tilde{X}$ may be approximated using central schemes, utilizing \eqref{eq:CentralFirstDeriv} and \eqref{eq:CentralSecondDeriv}, after which time is discretized using a weighted forward/backward Euler scheme. This leads to a FD scheme that is not entirely identical to the one suggested by \cite{Andersen1998}, and used by us, since it considers two different values of the local volatility - one evaluated at time $t_j$ and one evaluated at $t_{j + 1}$, and it also distributes $r$ between $\tilde{U}(t_j, x_i)$ and $\tilde{U}(t_{j + 1}, x_i)$ according to the chosen weight. In our scheme, we freeze the coefficient of the local volatility from time-step $t_j$ until $t_{j + 1}$. Further, we assign $r$ only to $U(t_j, x_i)$, i.e. we treat the $r\tilde{U}$ term in an explicit way.

\subsection{Valuation of American Put}

We discretize time by setting $n_1 = 300$. We choose $x_0$ and $x_{n_2}$ such that the current value, $\log(S(0))$, is approximately in the middle of the grid. We set $n_2 = 1001$ and the maximal stock price as $8\times S(0) = 80$, i.e. we consider the grid $x_0 = 0.2272941,\ldots, x_{1001} = \log(80)$. We set $\lambda_1 = 0.5$. 

The discretization \eqref{eq:DiscretOneDim} of the PDE-part of the variational inequality leads to the following tridiagonal system of equations.

\begin{equation}
\label{eq:TridiagSystem}
\left((1 + r\Delta_t) I - (1-\lambda_1) M_j\right) \tilde{U}_j = (I + \lambda_1 M_j)\tilde{U}_{j + 1} + B_j,
\end{equation}

for the $(n_2 - 1) \times (n_2 - 1)$ matrix 

$$
M_j=\begin{bmatrix}
c_{1, j} & u_{1, j} & 0        & 0        & 0      & \ldots     & 0\\
l_{2, j} & c_{2, j} & u_{2, j} & 0        & 0      & \ldots     & 0 \\
0        & l_{3, j} & c_{3, j} & u_{3, j} & 0      & \ldots     & 0  \\
\vdots   & \vdots   & \vdots   & \vdots   & \vdots & \vdots     & \vdots \\
0        & 0        & 0        &          & \ldots & l_{(n_2-1), j} & c_{(n_2-1), j}
\end{bmatrix}
$$

where 

$$
\begin{aligned}
l_{i, j} &= \frac{1}{2} \alpha \left(v(t_j, x_i) - \Delta_x w(t_j, x_i) \right),\\
c_{i, j} &= - \alpha v(t_j, x_i),\\
u_{i, j} &= \frac{1}{2} \alpha \left(v(t_j, x_i) + \Delta_x w(t_j, x_i) \right), \\
\alpha &= \frac{\Delta_t}{\Delta_x^2},
\end{aligned}
$$

and the $(n_2 -1)$ vectors 

$$
\tilde{U}_j=
\begin{bmatrix}
\tilde{U}(t_j, x_1)\\
\tilde{U}(t_j, x_2)\\
\tilde{U}(t_j, x_3)\\
\vdots\\
\tilde{U}(t_j, x_{n_2 - 1})\\
\end{bmatrix}, 
\tilde{U}_{j + 1}=
\begin{bmatrix}
\tilde{U}(t_{j + 1}, x_1)\\
\tilde{U}(t_{j + 1}, x_2)\\
\tilde{U}(t_{j + 1}, x_3)\\
\vdots\\
\tilde{U}(t_{j + 1}, x_{n_2 - 1})\\
\end{bmatrix}
$$
and
$$
B_j = 
\begin{bmatrix}
l_{1, j}\left((1-\lambda_1)\tilde{U}(t_j, x_0) + \lambda_1 \tilde{U}(t_{j + 1}, x_0\right)\\
0 \\
0 \\
\vdots \\
u_{(n_2 - 1), j}\left((1-\lambda_1)\tilde{U}(t_j, x_{n_2}) + \lambda_1 \tilde{U}(t_{j + 1}, x_{n_2}\right)\\
\end{bmatrix},
$$

where $\tilde{U}(t_j, x_0)$ and $\tilde{U}(t_j, x_{n_2})$ are specified through our boundary conditions for any $j = 0, \ldots, n_1 - 1$.

We approximate the boundary conditions of the American put by setting

\begin{itemize}
    \item $\tilde{U}(t_j, x_0) = K - \exp(x_0), \qquad j = 0, \ldots n_1 -1$,
    \item $\tilde{U}(t_j, x_{n_2}) = 0, \qquad j = 0, \ldots n_1-1$.
\end{itemize}

Going backwards in time we solve \eqref{eq:TridiagSystem}, and adapt the FD value whenever the immediate payoff exceeds it (within the iteration). Implicitly, we obtain an estimate of the exercise boundary at each discretized time-step by choosing the maximal stock value in the grid where early exercise takes place. 

\subsection{Calibration of the Dupire model}
\label{sec:AppendixCalibration}
For the calibration of the Dupire model, we use the same discretization of the $(t,\tilde{X})$-plane as for the valuation of the American put in the one-dimensional models. We inter- \& extrapolate the observed market prices using bicubic splines with a natural end condition, where we make use of a two-step spline approach, in which we first fit splines into the $K$-direction, and then we fit them into the $T$-direction. 

In the calibration approach proposed by \cite{Andersen1998}, $\lambda_1$ must lie within $[0, 1)$. We keep $\lambda_1 = 0.5$ and thus work with a semi-implicit approach. \citet[Equation 41, p. 19]{Andersen1998} obtain a discretized version of the Fokker-Planck equation, which can be re-arranged into a form that suggests a quadratic program subject to some bounds for the local volatility \citep[Equation 47, p. 22]{Andersen1998}. Further, \citet{Andersen1998} suggest that one could limit the optimization to the statistically significant region of the grid. We make the volatility bounds very wide and only require $\sigma(t_j, \exp(x_i)) \in [0.01, 6]$, i.e. $v(t_j, x_i) \in [0.0001, 36]$ for any $(t_j, x_i)$ within the optimization grid, which is composed of the discretized time-points that are above the first observed maturity and $\log(2) \leq x_i \leq \log(20)$. We solve the constrained quadratic program in this specified region, and extrapolate outside of the region using our two-step bicubic spline approach, and applying the bounds again if necessary. 

When evaluating the pricing performance of the calibrated local volatility surface, we price European calls by solving \eqref{eq:TridiagSystem} backwards in time and approximating the boundary conditions as

\begin{itemize}
    \item $\tilde{U}(t_j, x_0) = 0, \qquad j = 0, \ldots n_1 -1$,
    \item $\tilde{U}(t_j, x_{n_2}) = \exp(x_{n_2})-K\exp(-r(T-t_j)), \qquad j = 0, \ldots n_1-1$.
\end{itemize}

\section{Details on the Numerics in the Heston Model}
\label{sec:AppendixNumericsHeston}

For creation of the grid for the state variables, we set $m_1 = 500, m_2 = 110$, $s_{\max} = 8\times S(0)$ and $v_{\max} = 4.5$ and $c = K/5, d = v_{\max}/80$, i.e. we have $s_0 = 0, \ldots, s_{500} = 80$ and $v_0 = 0, \ldots, v_{100} = 4.5$. 

Based on our ordering scheme mentioned in Section \ref{sec:AmericanOptionPricingTwoDimNumerics}, the state discretization in the Heston model leads to the following system of stiff ODEs 

\small
$$
\begin{bmatrix}
\frac{\partial V(\hat{t}, s_1, v_0)}{\partial \hat{t}} \\
\vdots \\
\frac{\partial V(\hat{t}, s_{m_1-1}, v_0)}{\partial \hat{t}} \\
\vdots \\
\frac{\partial V(\hat{t}, s_{m_1-1}, v_{m_2-1})}{\partial \hat{t}}
\end{bmatrix}
= 
\begin{bmatrix}
\color{blue}{A^{0, 0}_{1, 1}} & \color{blue}{\ldots} & \color{blue}{A^{0, 0}_{1, m_1-1}} & \ldots &  A^{0, m_2-1}_{1, m_1-1}  \\
\color{blue}{\vdots} & \color{blue}{\ddots} & \color{blue}{\vdots} & \vdots & \vdots \\
\color{blue}{A^{0, 0}_{m_1-1, 1}} & \color{blue}{\ldots} & \color{blue}{A^{0, 0}_{m_1-1, m_1-1}} & \ldots &  A^{0, m_2-1}_{m_1-1, m_1-1}  \\
\vdots & \ddots & \vdots & \vdots & \vdots \\
A^{m_2-1, 0}_{m_1-1, 1} & \ldots & A^{m_2-1, 0}_{m_1-1, m_1-1} & \ldots &  A^{m_2-1, m_2-1}_{m_1-1, m_1-1}
\end{bmatrix}
\cdot\begin{bmatrix}
V(\hat{t}, s_1, v_0)\\
\vdots\\
V(\hat{t}, s_{m_1-1}, v_0)\\
\vdots \\
V(\hat{t}, s_{m_1-1}, v_{m_2-1})
\end{bmatrix}
+
\begin{bmatrix}
b_{1, 0}(\hat{t})\\
\vdots \\
b_{m_1-1, 0}(\hat{t})\\
\vdots \\
b_{m_1-1, m_2-1}(\hat{t})\\
\end{bmatrix},
$$
\normalsize

where the blue matrix is a block matrix. We use the notation $A^{r_1, c_1}_{r_2, c_2}$ where $r_1$ refers to the row index of the block matrix, $c_1$ denotes the column index of the block matrix, $r_2$ denotes the row index in the particular block matrix and $c_2$ the column index in the particular block matrix. Notice that we define A only on the interior points of our grid, which are given by $\{(s_i, v_i): 1 \leq i \leq m_1 - 1, 0 \leq j \leq m_2 - 1\}$. 
The entire matrix A is thus composed of block matrices, where each block matrix is of dimension $(m_1 - 1)\times (m_1 -1)$. In order to approximate $\frac{\partial V(\hat{t}, s_i, v_j)}{\partial \hat{t}}$ for any $i \in \{1, \ldots, m_1-1\}$ and $j \in \{0, \ldots, m_2-1\}$ we usually consider $3$ nonzero block matrices, while the rest of the block matrices in the block-row are zero-matrices. Only in the case where $j$ is such that $v_j > 1$ we consider $4$ nonzero block matrices (unless we are at the final interior point in the $v$-grid, in which case we again only consider $3$ nonzero block matrices). We note that the block matrices are sparse; for instance we have that each of the sub-matrices of $A_0$ and $A_1$ are tridiagonal (or zero-matrices) while $A_2$ is composed of diagonal matrices (and zero-matrices). 

Recalling our boundary conditions, we specify that at $s_{\min}$ we have value $K$, at $s_{\max}$ we have value $0$ and at $v_{\max}$ we again have value $K$. There will be one overlapping of our boundary conditions, namely at $(s_{\max}, v_{\max})$, where we let the boundary condition for $s_{\max}$ win. \\

We discretize time using the same time-grid as for the one-dimensional models, thus setting $m_3 = 300$. We use the notation $V^n$ for the approximation to the exact solution $V(t_n)$. Iteratively, the MCS scheme computes for $n = 1, 2, 3, \ldots, m_3$ 

$
\begin{aligned}
Y_0 &= V^{n-1} + \Delta_t(AV^{n-1} + b^{n-1}) \qquad \text{(Forward Euler)} \\
Y_1 &= (1 - \lambda_2 \Delta_t A_1)^{-1}\left[Y_0 + \lambda_2 \Delta_t (b_1^n - b_1^{n-1}-A_1 V^{n-1}) \right]\\
Y_2 &= (1 - \lambda_2 \Delta_t A_2)^{-1}\left[Y_1 + \lambda_2 \Delta_t(b_2^n - b_2^{n-1} - A_2 V^{n-1}) \right]\\
\hat{Y}_0 &= Y_0 + \lambda_2 \Delta_t \left[A_0(Y_2 - V^{n-1}) + b_0^{n} - b_0^{n-1} \right] \\
\tilde{Y}_0 &= \hat{Y}_0 + \left(\frac{1}{2}-\lambda_2\right) \Delta_t\left[A(Y_2 - V^{n-1}) + b^{n} - b^{n-1} \right]\\
\tilde{Y}_1 &= (1 - \lambda_2 \Delta_t A_1)^{-1}\left[\tilde{Y}_0 + \lambda_2 \Delta_t (b_1^n - b_1^{n-1}-A_1V^{n-1}) \right]\\
\tilde{Y}_2 &= (1 - \lambda_2 \Delta_t A_2)^{-1}\left[\tilde{Y}_1 + \lambda_2 \Delta_t (b_2^n - b_2^{n-1} - A_2V^{n-1}) \right] \\
V^n &= \tilde{Y}_2 \\
\end{aligned}
$

\begin{remark}
Recall that solving this \textit{forward} in \textit{time to maturity} is equivalent to solving it \textit{backward} in \textit{time}. Further, note that since we have a constant boundary vector for the American put, numerous terms within the scheme cancel out. 
\end{remark}

\begin{remark}
We highlight that the matrices $A_0, A_1, A_2$ are constant, and thus the inverses that are used above have to be computed once only, and can then be re-used for all iterations, which significantly speeds up computations. 
\end{remark}

In order to account for early exercise of the American put, we adapt this scheme for the pricing; specifically we set

\[
V^n = \max\{\tilde{Y}_2, (K -\hat{s})^+ \}, \qquad \text{ (Early Exercise)} \\
\]

where $\hat{s}$ is the interior part of the repetition of the $S$-grid for each $v$-grid value, i.e. we have the following structure $\hat{s} = (s_1, \ldots, s_{m_1-1}, s_1, \ldots, s_{m_1-1}, s_{m_1-1}, s_1,\ldots s_{m_1-1})$. 

Implicitly, we obtain the optimal exercise boundary for each $v$-grid point by searching for the highest $S$-grid point where early exercise takes place. We interpolate the obtained boundary into the $v$-direction using cubic splines with a natural end condition, where we consider a uniformly spaced grid of $v$-values ranging from $0$ to $2$ with $0.01$ increments. 

\section{Longstaff-Schwartz (Base Case)}
\label{sec:AppendixLSBase}

In the following, we consider the same base case as in Section \ref{sec:NumericalResultsBase}. We employ the Longstaff-Schwartz (LS) algorithm to approximate the optimal exercise boundaries, and make use of a randomization approach originally developed in \citet{Carr1998}, where we randomize the starting value of the stock and let $S(0) \sim N(10, 2.5^2)$ in all models. We simulate $1$ mio. paths with $300$ time-steps under all three models using such a random starting value of the stock. In the Black--Scholes and Dupire model we consider $\{S, S^2, S^3, S^4, S^5, S^6, S^7, S^8\}$ as the set of basis functions and for the Heston model we consider $\{S, S^2, S^3, v, Sv, S^2v\}$. 

Figure \ref{fig:BoundariesWithLS} illustrates the approximate boundaries. The LS approximation is quite similar to the FD approximation when considering the Black--Scholes and Dupire model. Under the Heston model, we notice that the approximations diverge to a larger extent, which is to be expected to some degree as the root-solving becomes more challenging once there is more than one state variable. We notice that the approximation under LS becomes quite non-smooth when considering high volatility values. An application of these approximated exercise boundaries thus leads to noisier results. We omit further details.

\begin{figure}[h!]
\centering
\includegraphics[scale=0.50]{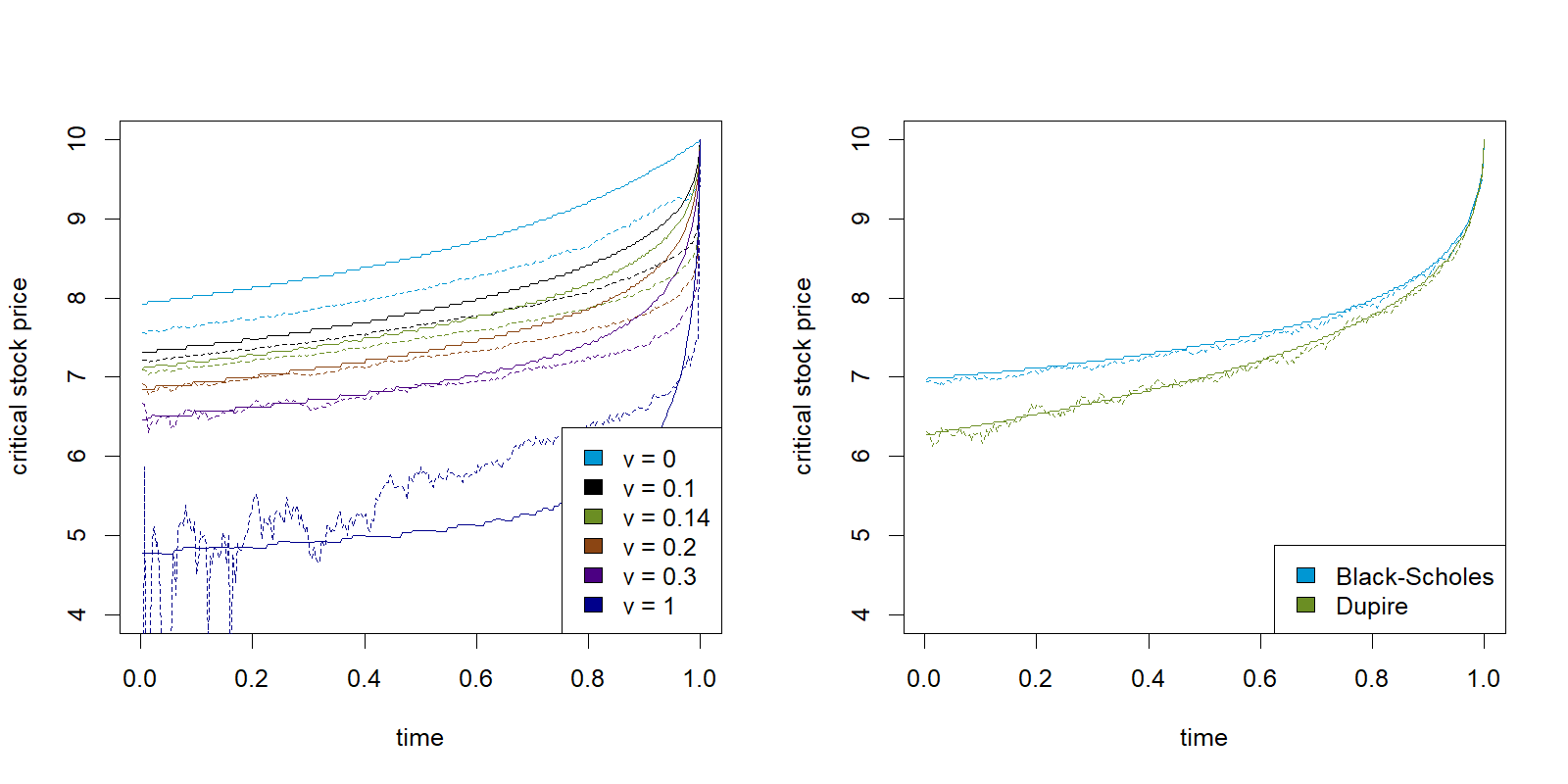}
\caption{Approximate optimal exercise boundaries based on LS (dashed lines) and FD (solid lines). Left: approximate boundary for selected volatility values under the Heston model. Right: approximate boundaries under the misspecified models (Black--Scholes/Dupire).}
\label{fig:BoundariesWithLS}
\end{figure}

\end{document}